\newtheorem{definition}{Definition}[section]
\newtheorem{theorem}[definition]{Theorem}
\newtheorem{lemma}[definition]{Lemma}
\newtheorem{example}[definition]{Example}
\newtheorem{conjecture}[definition]{Conjecture}
\newtheorem{problem}[definition]{Problem}
\newtheorem{note}[definition]{Note}
\newtheorem{proposition}[definition]{Proposition}
\begin{document}
\title{\bf  
An action of the free product
$\mathbb Z_2 \star \mathbb Z_2 \star \mathbb Z_2$ on the
$q$-Onsager algebra and its current algebra
}
\author{
Paul Terwilliger 
}
\date{}

\maketitle
\begin{abstract}
Recently Pascal Baseilhac and Stefan Kolb introduced some
automorphisms
$T_0$, $T_1$ of the $q$-Onsager algebra $\mathcal O_q$,
that are roughly analogous to the Lusztig automorphisms of
$U_q(\widehat{\mathfrak{sl}}_2)$.
We use $T_0, T_1$ and a certain antiautomorphism of $\mathcal O_q$
to obtain an action of the free product
$\mathbb Z_2 \star \mathbb Z_2 \star \mathbb Z_2$ on $\mathcal O_q$
as a group of (auto/antiauto)-morphisms.
The action forms a pattern much more
symmetric than expected.
We show that a similar
phenomenon occurs for the 
associated current algebra $\mathcal A_q$.
We give some conjectures and problems concerning $\mathcal O_q$ and
$\mathcal A_q$.

\bigskip
\noindent
{\bf Keywords}. $q$-Onsager algebra, current algebra, tridiagonal pair.
\hfil\break
\noindent {\bf 2010 Mathematics Subject Classification}. 
Primary: 33D80. Secondary  17B40.

 \end{abstract}

\section{Introduction}

\noindent 
We will be discussing the $q$-Onsager algebra $\mathcal O_q$ 
\cite{bas1,qSerre}.
This algebra is infinite-dimensional and noncommutative,
with a presentation involving two generators and two
relations called the $q$-Dolan/Grady
relations.
The algebra appears in
a number of contexts which we now summarize.
The algebra $\mathcal O_q$ is a $q$-deformation of the Onsager
algebra from mathematical physics 
\cite{Onsager},
\cite[Remark~9.1]{madrid}
and is currently being
used to investigate statistical
mechanical models such as the XXZ open spin chain 
\cite{bas2,bas1,
basXXZ,
BK05,
bas4,
basKoi}.
The algebra $\mathcal O_q$ appears in the theory of tridiagonal pairs;
this is a pair of diagonalizable 
linear transformations on a finite-dimensional
vector space, each acting on the eigenspaces of the other
in a block-tridiagonal fashion 
\cite{TD00,
LS99}.
A tridiagonal pair of $q$-Racah
type
\cite{ItoTer}
is essentially the same thing as a finite-dimensional irreducible
$\mathcal O_q$-module
 \cite[Theorem~3.10]{qSerre}.
See
\cite{INT,
TD00,
ItoTer,
IT:aug,
LS99,
madrid,
aw}
for work relating $\mathcal O_q$
and tridiagonal pairs. 
The algebra $\mathcal O_q$ comes up in algebraic combinatorics, in
connection with the subconstituent
algebra of a $Q$-polynomial distance-regular graph
\cite{tersub3,TD00}.
This topic is where $\mathcal O_q$ originated;
 to our knowledge the $q$-Dolan/Grady relations first appeared in
\cite[Lemma~5.4]{tersub3}.
The algebra $\mathcal O_q$ appears in the theory of quantum groups,
as a coideal subalgebra of
$U_q(\widehat{\mathfrak{sl}}_2)$ 
\cite{bc,bas8,kolb}.
There exists an injective algebra homomorphism from $\mathcal O_q$ into
the algebra $\square_q$
\cite[Proposition~5.6]{pospart},
and a noninjective algebra homomorphism from $\mathcal O_q$ into
the universal Askey-Wilson algebra
\cite[Sections~9,~10]{uaw}, \cite{pbw}.
\medskip

\noindent We will be discussing some automorphisms and antiautomorphisms
of $\mathcal O_q$.
In \cite{BK} Pascal Baseilhac and Stefan Kolb
introduced two automorphisms $T_0, T_1$ of $\mathcal O_q$
that are roughly analogous to the Lusztig automorphisms of
$U_q(\widehat{\mathfrak{sl}}_2)$.
More information about $T_0, T_1$ is given in
\cite{lusztigaut}.
Using $T_0, T_1$ and a certain antiautomorphism of $\mathcal O_q$,
we will obtain an action of the free product
$\mathbb Z_2 \star \mathbb Z_2 \star \mathbb Z_2$ on
$\mathcal O_q$ as a group of
(auto/antiauto)-morphisms. The action seems remarkable 
because it forms a pattern much more symmetric than expected.
We show that a similar phenomenon occurs for
the current algebra 
$\mathcal A_q$ of $\mathcal O_q$. Our main results are
Theorem
\ref{thm:z3}
and Theorem
\ref{thm:secondres}.
At the end of the paper we give some conjectures and problems
concerning $\mathcal O_q$ and $\mathcal A_q$.

\section{The $q$-Onsager algebra $\mathcal O_q$}

\noindent 
We will define the $q$-Onsager algebra after a few
comments.
Let $\mathbb F$ denote a field.
All vector spaces
discussed in this paper are over $\mathbb F$.
All algebras discussed in this paper 
are
associative, over $\mathbb F$,  and have a multiplicative identity.
A subalgebra has the same multiplicative identity
as the parent algebra. 
For an algebra $\mathcal A$, 
an {\it automorphism} of  $\mathcal A$ is an 
algebra
isomorphism $\mathcal A \to \mathcal A$.
An {\it antiautomorphism}
of $\mathcal A$ is an $\mathbb F$-linear bijection
$\sigma : \mathcal A\to \mathcal A$ such that
$(XY)^\sigma= Y^\sigma X^\sigma$ for all $X,Y \in \mathcal A$.
If $\mathcal A$ is commutative, then there is no difference between
an automorphism and antiautomorphism of $\mathcal A$.
If $\mathcal A$ is noncommutative, then no map is both
an
automorphism and antiautomorphism of $\mathcal A$.
Recall the natural numbers
$\mathbb N = \lbrace 0,1,2,\ldots \rbrace$.
Fix $0 \not=q \in \mathbb F$ that is not a root of unity. We will
use the notation
\begin{eqnarray*}
\lbrack n\rbrack_q = \frac{q^n-q^{-n}}{q-q^{-1}}
\qquad \qquad n \in \mathbb N.
\end{eqnarray*}

\begin{definition}
\label{def:qons}
{\rm (See \cite[Section~2]{bas1},
\cite[Definition~3.9]{qSerre}.)}
\rm Define the algebra ${\mathcal O}_q$ 
by generators $A,B$ and relations
\begin{eqnarray}
&&A^3B-\lbrack 3\rbrack_q A^2BA+ 
\lbrack 3\rbrack_q ABA^2 -BA^3 = (q^2-q^{-2})^2 (BA-AB),
\label{eq:dg1}
\\
&&B^3A-\lbrack 3\rbrack_q B^2AB + 
\lbrack 3\rbrack_q BAB^2 -AB^3 = (q^2-q^{-2})^2 (AB-BA).
\label{eq:dg2}
\end{eqnarray}
\noindent We call ${\mathcal O}_q$ the {\it  $q$-Onsager algebra}. 
The relations (\ref{eq:dg1}),  (\ref{eq:dg2}) 
are called the {\it $q$-Dolan/Grady relations}.
\end{definition}
\noindent 

\noindent We now consider some automorphisms of $\mathcal O_q$.
By the form of the relations
(\ref{eq:dg1}),  (\ref{eq:dg2}) there exists an automorphism of
$\mathcal O_q$ that swaps $A, B$. The following
automorphisms of $\mathcal O_q$ are less obvious.
In \cite{BK} Pascal Baseilhac and Stefan Kolb
introduced some automorphisms $T_0$, $T_1$ of ${\mathcal O}_q$ that
satisfy
\begin{align}
&T_0(A)=A, \qquad \qquad
T_0(B) = B + \frac{qA^2B-(q+q^{-1})ABA+q^{-1}BA^2}{(q-q^{-1})(q^2-q^{-2})},
\label{eq:bk1}
\\
&T_1(B)=B, \qquad \qquad
T_1(A) = A + \frac{qB^2A-(q+q^{-1})BAB+q^{-1}AB^2}{(q-q^{-1})(q^2-q^{-2})}.
\label{eq:bk2}
\end{align}
The inverse automorphisms satisfy
\begin{align}
&T_0^{-1}(A)=A, \qquad \qquad
T_0^{-1}(B) = B + \frac{q^{-1}A^2B-(q+q^{-1})ABA+qBA^2}{(q-q^{-1})(q^2-q^{-2})},
\label{eq:bk3}
\\
&T_1^{-1}(B)=B, \qquad \qquad
T_1^{-1}(A) = A + \frac{q^{-1}B^2A-(q+q^{-1})BAB+qAB^2}{(q-q^{-1})(q^2-q^{-2})}.
\label{eq:bk4}
\end{align}

\noindent In \cite{BK} the automorphisms $T_0, T_1$ 
are used to construct a
Poincar\'e-Birkhoff-Witt (or PBW)
basis for $\mathcal O_q$. In that
construction the following result is used.
\begin{lemma}
\label{lem:qcom}
{\rm (See \cite[Lemma~3.1]{BK}.)}
For the algebra $\mathcal O_q$,
the map $T_0$ sends
\begin{align}
\label{eq:qcom1}
qBA-q^{-1}AB \mapsto qAB-q^{-1}BA,
\end{align}
and the map
$T_1$ sends
\begin{align}
\label{eq:qcom2}
qAB-q^{-1}BA \mapsto qBA-q^{-1}AB.
\end{align}
\end{lemma}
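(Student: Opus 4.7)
The plan is to verify both parts by direct computation, using the explicit formulas (\ref{eq:bk1})--(\ref{eq:bk2}) for $T_0, T_1$ together with the $q$-Dolan/Grady relations (\ref{eq:dg1})--(\ref{eq:dg2}). The two assertions are symmetric under the swap $A \leftrightarrow B$, $T_0 \leftrightarrow T_1$, so it suffices to carry out one and infer the other.

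First I would introduce the shorthand $C := qA^2B - (q+q^{-1})ABA + q^{-1}BA^2$ and $N := (q-q^{-1})(q^2-q^{-2})$, so that $T_0(B) = B + C/N$ and $T_0(A) = A$. Then
\begin{align*}
T_0(qBA - q^{-1}AB) &= q\bigl(B + C/N\bigr)A - q^{-1}A\bigl(B + C/N\bigr) \\
&= (qBA - q^{-1}AB) + \frac{qCA - q^{-1}AC}{N}.
\end{align*}
The target $qAB - q^{-1}BA$ differs from $qBA - q^{-1}AB$ by $(q+q^{-1})(AB - BA)$, so the task reduces to showing
\begin{align*}
qCA - q^{-1}AC \;=\; (q+q^{-1})\,N\,(AB - BA).
\end{align*}

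Next I would expand $qCA - q^{-1}AC$. Distributing yields terms in $A^3B$, $A^2BA$, $ABA^2$, $BA^3$, and the coefficients collect as $-1$, $q^2 + 1 + q^{-2}$, $-(q^2 + 1 + q^{-2})$, and $+1$ respectively. Recognizing $q^2 + 1 + q^{-2} = [3]_q$, the result is precisely $-\bigl(A^3B - [3]_q A^2BA + [3]_q ABA^2 - BA^3\bigr)$. Here the first $q$-Dolan/Grady relation (\ref{eq:dg1}) applies, converting this to $-(q^2-q^{-2})^2(BA - AB) = (q^2-q^{-2})^2(AB-BA)$. Since $(q^2-q^{-2})^2 = (q+q^{-1})(q-q^{-1})(q^2-q^{-2}) = (q+q^{-1})N$, this matches what is needed, and assertion (\ref{eq:qcom1}) follows.

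For assertion (\ref{eq:qcom2}), the roles of $A$ and $B$ are interchanged in the definition of $T_1$ relative to $T_0$, and likewise the element $qAB - q^{-1}BA$ is obtained from $qBA - q^{-1}AB$ by swapping $A$ and $B$. Carrying out the identical calculation with this swap, and invoking relation (\ref{eq:dg2}) in place of (\ref{eq:dg1}), yields $T_1(qAB - q^{-1}BA) = qBA - q^{-1}AB$. I do not anticipate a genuine obstacle here: the only subtle moment is spotting the identity $q^2 + 1 + q^{-2} = [3]_q$, which is precisely what makes the output of the expansion recognizable as a $q$-Dolan/Grady left-hand side.
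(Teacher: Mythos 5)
Your proof is correct and follows exactly the route the paper takes: apply $T_0$ using that it fixes $A$, substitute the formula (\ref{eq:bk1}) for $T_0(B)$, and reduce the resulting expression via the $q$-Dolan/Grady relation (\ref{eq:dg1}); the paper merely states this outline while you carry out the expansion explicitly (and your coefficient bookkeeping, including $q^2+1+q^{-2}=[3]_q$ and $(q+q^{-1})(q-q^{-1})(q^2-q^{-2})=(q^2-q^{-2})^2$, checks out). The only cosmetic caveat is that your shorthand $C$ clashes with the element $C$ the paper later defines in (\ref{eq:C}), so a different letter would be safer.
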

\begin{proof} 
The map $T_0$ is an automorphism of $\mathcal O_q$ that fixes $A$.
Therefore, $T_0$ sends 
$qBA-q^{-1}AB \mapsto 
qT_0(B)A-q^{-1}AT_0(B)$.
To check that 
$qT_0(B)A-q^{-1}AT_0(B) = 
qAB-q^{-1}BA$, 
eliminate $T_0(B)$ using
(\ref{eq:bk1})  and evaluate the result using
(\ref{eq:dg1}). We have verified the assertion about $T_0$. The
assertion about $T_1$ is similarly verified.
\end{proof}

\noindent
The automorphism group
${\rm Aut}(\mathcal O_q)$ consists of the
automorphisms of 
the algebra $\mathcal O_q$; the group operation is composition.
\begin{definition}
\label{def:N} \rm Let $N$ denote the subgroup of
${\rm Aut}(\mathcal O_q)$ generated by
 $T^{\pm 1}_0, T^{\pm 1}_1$.
 \end{definition}

\begin{lemma} 
\label{lem:free}
{\rm (See 
\cite[Section~1]{pbw}.)}
The group $N$ is freely generated by $T^{\pm 1}_0, T^{\pm 1}_1$.
\end{lemma}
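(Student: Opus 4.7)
The goal is to show that no nontrivial reduced word in $T_0^{\pm 1}, T_1^{\pm 1}$ equals the identity automorphism of $\mathcal O_q$. My strategy is to apply the ping-pong (Klein) lemma to the action of $N$ on the set $\mathcal O_q$.

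As a preliminary, I would verify that each of $T_0$ and $T_1$ has infinite order. Since $T_0$ fixes $A$, it is determined by what it does to $B$, and in \cite{BK} the iterated images $T_0^n(B)$ (for $n \in \mathbb Z$) are shown to be linearly independent in $\mathcal O_q$, since they appear as distinct elements of a PBW basis. This forces $T_0^n \ne \mathrm{id}$ for every nonzero $n$, and the symmetric argument with $A$ and $B$ swapped handles $T_1$.

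The main step is the ping-pong partition. Using the PBW basis of $\mathcal O_q$ from \cite{BK}, the real root vectors divide naturally into a $T_0$-family, obtained by iteratively applying $T_0^{\pm 1}$ to $B$, and a $T_1$-family obtained analogously from $A$, together with a family of imaginary root vectors. Let $\Omega_0 \subset \mathcal O_q$ be the subset of elements whose PBW leading term lies in the $T_0$-family with nonzero index, and define $\Omega_1$ symmetrically; these sets are disjoint. The content of Lemma \ref{lem:qcom}, amplified inductively along the lines of \cite[Section~1]{pbw}, is that for every nonzero $n \in \mathbb Z$ one has $T_0^n(\Omega_1) \subseteq \Omega_0$ and $T_1^n(\Omega_0) \subseteq \Omega_1$. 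Taking $B$ as a ping-pong ``ball'' outside both sets, the Klein criterion then forces any nontrivial reduced product $T_{i_1}^{n_1} T_{i_2}^{n_2} \cdots T_{i_k}^{n_k}$ with $i_j \ne i_{j+1}$ and $n_j \ne 0$ to carry $B$ into $\Omega_0 \cup \Omega_1$, so it cannot equal the identity.

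The main obstacle is the ping-pong invariance itself: verifying that $T_i^n$ with $n \ne 0$ genuinely carries the opposite root vector family into its own, uniformly in $n$. This requires tracking leading terms in the Baseilhac--Kolb PBW basis under iteration of the quadratic formulas (\ref{eq:bk1})--(\ref{eq:bk4}), and is where I would rely on the detailed bookkeeping performed in \cite[Section~1]{pbw}. An alternative route, if available, would be to realize $T_0, T_1$ as restrictions to the coideal $\mathcal O_q$ of Lusztig-type automorphisms on $U_q(\widehat{\mathfrak{sl}}_2)$, where the affine braid group of type $A_1^{(1)}$ is already known to be free of rank two, and then check that this restriction is faithful on the subgroup in question.
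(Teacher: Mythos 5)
First, a point of calibration: the paper does not prove Lemma \ref{lem:free} at all --- it is imported verbatim from \cite[Section~1]{pbw}, and no argument is given in the present text. So there is no internal proof to match your proposal against; the question is whether your outline would stand on its own. It would not, for two concrete reasons. First, your description of the Baseilhac--Kolb PBW basis is not what \cite{BK} provides. The real root vectors there are indexed by the affine roots $n\delta+\alpha_0$, $n\delta+\alpha_1$ and are produced by \emph{alternating} words in $T_0,T_1$ (as dictated by the affine Weyl group element sending $\alpha_i$ to $n\delta+\alpha_i$), not by powers of a single $T_i$. So the family $\{T_0^n(B)\}_{n\in\mathbb Z}$ does not ``appear as distinct elements of a PBW basis,'' and your infinite-order step, as well as the very definition of the tables $\Omega_0,\Omega_1$ (``PBW leading term in the $T_0$-family''), rests on a structure that the cited source does not supply. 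The linear independence of $\{T_0^n(B)\}$ is plausible via a filtration-degree argument, but that argument is not the one you give and would itself need the associated graded structure of $\mathcal O_q$.

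Second, the heart of any ping-pong proof here is the uniform inclusion $T_0^n(\Omega_1)\subseteq\Omega_0$ and $T_1^n(\Omega_0)\subseteq\Omega_1$ for all $n\neq 0$, and this is precisely the step you do not carry out. Lemma \ref{lem:qcom} is a single identity about one specific quadratic element; it does not ``amplify inductively'' into a statement about how $T_0^{\pm1}$ transforms the leading term of an arbitrary element of $\mathcal O_q$, and no mechanism for that amplification is indicated beyond deferring to \cite{pbw} --- i.e., to the same citation the paper already makes, which is circular as a proof of the lemma. There are also smaller unstated hypotheses your ``ball'' variant needs (e.g., $T_0^n(B)\in\Omega_0$ for $n\neq 0$, which is not implied by the table inclusions since $B$ lies outside both tables). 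The strategy is not unreasonable --- a freeness proof via leading terms in a suitable filtration, or via restriction of the affine braid group action on $U_q(\widehat{\mathfrak{sl}}_2)$ as you mention at the end, are both viable directions --- but as written the proposal identifies where the difficulty lives and then does not engage with it.
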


\noindent We have been discussing automorphisms of $\mathcal O_q$.
We now bring in antiautomorphisms of $\mathcal O_q$.
\begin{lemma} 
\label{lem:gam1}
There exists an antiautomorphism $S$ of $\mathcal O_q$
that fixes $A$ and $B$. Moreover $S^2=1$.
\end{lemma}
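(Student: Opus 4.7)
\bigskip

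\noindent\textbf{Proof proposal.} The plan is to build $S$ via the universal property of $\mathcal O_q$ as a quotient of the free algebra on $A$ and $B$. Let $\mathbb F\langle A,B\rangle$ denote this free algebra. There is a unique $\mathbb F$-linear antiautomorphism $\widetilde S$ of $\mathbb F\langle A,B\rangle$ that fixes $A$ and $B$, namely the word-reversal map sending $w_1 w_2\cdots w_n\mapsto w_n\cdots w_2 w_1$. I would then show that $\widetilde S$ sends the two-sided ideal $I$ generated by the $q$-Dolan/Grady relations into itself, so that $\widetilde S$ descends to an antiautomorphism $S$ of $\mathcal O_q=\mathbb F\langle A,B\rangle/I$ fixing $A$ and $B$.

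To see $\widetilde S(I)\subseteq I$, since $\widetilde S$ is an antiautomorphism it suffices to verify that $\widetilde S$ sends each of the two defining relators into $I$. Writing $R_1$ for the element $A^3B-[3]_q A^2BA+[3]_q ABA^2-BA^3-(q^2-q^{-2})^2(BA-AB)$, reversal gives
\begin{equation*}
\widetilde S(R_1)=BA^3-[3]_q ABA^2+[3]_q A^2BA-A^3B-(q^2-q^{-2})^2(AB-BA)=-R_1,
\end{equation*}
which lies in $I$. The analogous computation, obtained by swapping the roles of $A$ and $B$, shows that the second relator is also sent to its negative. Hence $\widetilde S(I)\subseteq I$, and $S$ is well-defined.

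Finally, since reversing a word twice returns the original word, $\widetilde S^2$ is the identity on $\mathbb F\langle A,B\rangle$, and therefore $S^2=1$ on $\mathcal O_q$. I do not anticipate a serious obstacle: the only step requiring care is the sign bookkeeping in the calculation $\widetilde S(R_1)=-R_1$, and this is immediate from the fact that both the leading cubic expression and the commutator $BA-AB$ are antisymmetric under word reversal.
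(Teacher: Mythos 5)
Your proposal is correct and is exactly the argument the paper compresses into the one line ``By the form of the $q$-Dolan/Grady relations'': word reversal on the free algebra sends each relator to its negative, hence preserves the defining ideal and descends to an involutive antiautomorphism fixing $A$ and $B$. No issues.
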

\begin{proof} By the form of the $q$-Dolan/Grady relations.
\end{proof}
\noindent
The antiautomorphism $S$ is related to the automorphisms
$T_0$, $T_1$ in the following way.
\begin{lemma} 
\label{lem:alphabeta1}
For the algebra $\mathcal O_q$,
\begin{align}
S T_0 S = T^{-1}_0, \qquad \qquad
S T_1 S = T^{-1}_1.
\label{eq:gam}
\end{align}
\end{lemma}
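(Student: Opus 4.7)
The plan is to verify each equation in (\ref{eq:gam}) separately by showing that both sides agree on the generators $A,B$. I will focus on $ST_0 S = T_0^{-1}$; the other identity is analogous.

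First I would observe that $ST_0S$ is an automorphism of $\mathcal O_q$: the composition of two antiautomorphisms with an automorphism in the middle reverses the order of multiplication twice, yielding an automorphism. Since $T_0^{-1}$ is also an automorphism, the identity $ST_0S = T_0^{-1}$ will follow from checking agreement on $A$ and on $B$, because $\mathcal O_q$ is generated by $A,B$.

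The check on $A$ is immediate: $S(A)=A$, $T_0(A)=A$, $T_0^{-1}(A)=A$, so $(ST_0S)(A) = S(T_0(A)) = S(A) = A = T_0^{-1}(A)$.

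The check on $B$ is the substantive step. Using $S(B)=B$, we get $(ST_0S)(B) = S(T_0(B))$, so I would apply $S$ to the explicit formula (\ref{eq:bk1}) for $T_0(B)$. Since $S$ is an $\mathbb F$-linear antiautomorphism fixing $A$ and $B$, it reverses each monomial, sending $A^2B \mapsto BA^2$, $ABA\mapsto ABA$, and $BA^2 \mapsto A^2 B$. Thus the numerator $qA^2B - (q+q^{-1})ABA + q^{-1}BA^2$ is transformed into $qBA^2 - (q+q^{-1})ABA + q^{-1}A^2B = q^{-1}A^2B - (q+q^{-1})ABA + qBA^2$. Comparing with (\ref{eq:bk3}), this is precisely $T_0^{-1}(B)$, so $(ST_0S)(B) = T_0^{-1}(B)$. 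The identity $ST_1S = T_1^{-1}$ follows in the same way from (\ref{eq:bk2}) and (\ref{eq:bk4}), with the roles of $A$ and $B$ swapped.

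There is no real obstacle here; the proof is bookkeeping. The only thing to be careful about is that $ST_0S$ must be verified to be an automorphism (not an antiautomorphism) before reducing to a check on generators, and that when applying $S$ to a monomial one reverses the factors; these two points together are what makes the constant $q$ in the numerator of $T_0(B)$ swap with the constant $q^{-1}$ to produce the formula for $T_0^{-1}(B)$.
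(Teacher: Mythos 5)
Your proof is correct and follows essentially the same route as the paper: both sides of each identity are automorphisms, so it suffices to check agreement on the generators $A$ and $B$, which you do by applying $S$ to the explicit formula (\ref{eq:bk1}) and comparing with (\ref{eq:bk3}). The paper leaves this computation implicit, while you carry it out; the content is identical.
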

\begin{proof} We verify
the equation on the left in
(\ref{eq:gam}). In that equation,
each side is
an
automorphism of $\mathcal O_q$.
These automorphisms agree at $A$ and $B$;
this is checked using
(\ref{eq:bk1}) and
(\ref{eq:bk3}). These automorphisms are equal
since $A$, $B$ generate $\mathcal O_q$.
We have verified the equation on the left in
(\ref{eq:gam}). The 
 equation on the right in
(\ref{eq:gam}) is similarly verified.
\end{proof}

\noindent Let ${\rm AAut}(\mathcal O_q)$ denote the group consisting
of the automorphisms and antiautomorphisms of $\mathcal O_q$;
the group operation is composition.
The group ${\rm Aut}(\mathcal O_q)$ is a normal subgroup of
 ${\rm AAut}(\mathcal O_q)$ with index 2. 
An element of ${\rm AAut}(\mathcal O_q)$ will be called 
an {\it (auto/antiauto)-morphism of $\mathcal O_q$}.

\begin{definition}
\label{def:GH}
\rm 
Let $H$ denote the subgroup of 
${\rm AAut}(\mathcal O_q)$ generated by $S$.
Let $G$ denote the subgroup of 
${\rm AAut}(\mathcal O_q)$ generated by $H$ and $N$.
\end{definition}

\begin{lemma} The following {\rm (i)--(iv)} hold.
\begin{enumerate}
\item[\rm (i)] 
The group $H$ has order 2 and is not contained in $N$.
\item[\rm (ii)] 
The group $N$ is a normal subgroup of $G$
with index 2.
\item[\rm (iii)] 
$G=
N \rtimes H$ (semidirect product).
\item[\rm (iv)] 
$N = {\rm Aut}(\mathcal O_q) \cap G$.
\end{enumerate}
\end{lemma}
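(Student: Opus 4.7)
My plan is to address the four items in order, relying on two core facts: every element of $N$ is an automorphism by Definition~\ref{def:N}, while $S$ is an antiautomorphism of the noncommutative algebra $\mathcal O_q$; as recorded just before Definition~\ref{def:qons}, no single map of a noncommutative algebra can be both an automorphism and an antiautomorphism. This dichotomy is what carries most of the argument.

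For (i), Lemma~\ref{lem:gam1} gives $S^2 = 1$, and the dichotomy gives $S \neq 1$, so $H = \{1, S\}$ has order $2$; the same dichotomy immediately yields $S \notin N$, hence $H \not\subseteq N$. For (ii), I would use Lemma~\ref{lem:alphabeta1} together with $S^{-1} = S$ to conclude that conjugation by $S$ permutes the generating set $\{T_0^{\pm 1}, T_1^{\pm 1}\}$ of $N$, so $SNS^{-1} = N$. Since $G$ is generated by $N$ together with $H = \langle S\rangle$, this gives $N \trianglelefteq G$. Normality lets me rewrite any word in the generators with all factors of $S$ collected on the right, so $G = N \cup NS$; combined with $S \notin N$, this yields $[G:N] = 2$.

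For (iii), the three conditions $N \trianglelefteq G$, $NH = G$ (from (ii)), and $N \cap H = \{1\}$ (from $H = \{1, S\}$ and $S \notin N$) are exactly the conditions for an internal semidirect product, giving $G = N \rtimes H$. For (iv), the inclusion $N \subseteq \mathrm{Aut}(\mathcal O_q) \cap G$ is immediate from Definition~\ref{def:N}. Conversely, by (ii) every element of $G$ lies either in $N$ or in $NS$; an element of $NS$ is the composition of an automorphism of $\mathcal O_q$ with the antiautomorphism $S$, hence is itself an antiautomorphism, and by the dichotomy is therefore not an automorphism.

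I do not anticipate a serious obstacle; the only point requiring care is verifying that $\mathcal O_q$ is genuinely noncommutative, so that the automorphism/antiautomorphism dichotomy applies — this is routine and follows, for example, from the existence of the PBW basis referenced around Lemma~\ref{lem:free}. Beyond that, the proof is coset bookkeeping in $G$ once Lemma~\ref{lem:alphabeta1} is invoked to establish normality of $N$.
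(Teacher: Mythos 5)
Your proof is correct and follows essentially the same route as the paper: order $2$ from $S^2=1$, non-containment because $S$ is not an automorphism, normality of $N$ via Lemma~\ref{lem:alphabeta1}, the semidirect product from (i) and (ii), and the coset decomposition $G = N \cup NS$ for (iv). You supply one detail the paper leaves implicit, namely that $S \neq 1$ requires $\mathcal O_q$ to be noncommutative; this is a reasonable point of care but does not change the argument.
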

\begin{proof} (i) The group $H$ has order 2 by the last assertion of
Lemma
\ref{lem:gam1}.
The group $H$ 
is not contained in $N$, since 
${\rm Aut}(\mathcal O_q)$ contains $N$ but not $S$.
\\
\noindent (ii) By Lemma
\ref{lem:alphabeta1} and part (i) above.
\\
\noindent (iii) By parts (i), (ii) above.
\\
\noindent (iv) The group $G$ is the union of cosets
$N$ and $N S$.
The elements of $N$ are in 
${\rm Aut}(\mathcal O_q)$, and the elements of $NS$ are not in
${\rm Aut}(\mathcal O_q)$.
\end{proof}

\noindent We now consider $G$ from another point of view.
Let $\mathbb Z_2$
denote the group with two elements. The free
product 
$\mathbb Z_2 \star \mathbb Z_2 \star \mathbb Z_2$
has a presentation by generators $a,b,c$ and relations
$a^2=b^2=c^2=1$.
Shortly we will display a group isomorphism
$\mathbb Z_2 \star \mathbb Z_2 \star \mathbb Z_2 \to 
G$. To motivate this isomorphism we give a second presentation
of 
$\mathbb Z_2 \star \mathbb Z_2 \star \mathbb Z_2$ by
generators and relations.

\begin{lemma}
\label{lem:isoiso}
The group
$\mathbb Z_2 \star \mathbb Z_2 \star \mathbb Z_2$
 is isomorphic to the group defined by generators
$s$, $t^{\pm 1}_0$, $t^{\pm 1}_1$ and relations
\begin{align}
&t_0 t^{-1}_0 = t^{-1}_0 t_0 = 1, \qquad \qquad
t_1 t^{-1}_1 = t^{-1}_1 t_1 = 1,
\\
&s^2=1, \qquad \qquad s t_0 s = t^{-1}_0, \qquad \qquad s t_1 s = t^{-1}_1.
\label{eq:rels}
\end{align}
An isomorphism sends
\begin{align}
\label{eq:send1}
a \mapsto s t_1,
\qquad \qquad
b \mapsto t_0 s,
\qquad \qquad
c \mapsto s.
\end{align}
The inverse isomorphism sends
\begin{align}
\label{eq:send2}
t_0 \mapsto bc, \qquad 
t^{-1}_0 \mapsto cb, \qquad 
t_1 \mapsto ca, \qquad 
t^{-1}_1 \mapsto ac, \qquad 
s \mapsto  c.
\end{align}
\end{lemma}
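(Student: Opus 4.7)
The plan is to invoke the universal property of group presentations twice to construct mutually inverse homomorphisms between
$\Gamma := \mathbb{Z}_2 \star \mathbb{Z}_2 \star \mathbb{Z}_2 = \langle a,b,c \mid a^2=b^2=c^2=1\rangle$
and the group $\Gamma'$ presented by $s,t_0^{\pm 1},t_1^{\pm 1}$ and the relations displayed in the lemma. Since both groups are given by presentations, it suffices to check that each proposed map carries the defining relations of its source to consequences of the defining relations of its target, and then to verify on generators that the compositions are the identity.

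First I would construct $\varphi\colon \Gamma \to \Gamma'$ by declaring $a \mapsto st_1$, $b \mapsto t_0 s$, $c \mapsto s$. For this to descend to $\Gamma$, I must check that the images are involutions in $\Gamma'$. For $c$ this is immediate from $s^2=1$. For $a$, using $st_1 s = t_1^{-1}$ one computes
\[
(st_1)^2 = (st_1 s)\, t_1 = t_1^{-1} t_1 = 1,
\]
and symmetrically $(t_0 s)^2 = t_0(st_0 s)\cdot \text{(wait)}$: more cleanly, $(t_0 s)(t_0 s) = t_0 (st_0 s) s^{-1}\cdot s = t_0 t_0^{-1} = 1$ after using $s t_0 s = t_0^{-1}$ in the form $t_0 s = s t_0^{-1}$, so $(t_0 s)^2 = t_0 s \cdot s t_0^{-1}\cdot \text{—}$ I would simply write $(t_0 s)^2 = t_0(s t_0 s) = t_0\cdot t_0^{-1} = 1$.

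Next I would construct $\psi\colon \Gamma' \to \Gamma$ by $s\mapsto c$, $t_0 \mapsto bc$, $t_0^{-1}\mapsto cb$, $t_1\mapsto ca$, $t_1^{-1}\mapsto ac$. To see this descends, I verify each defining relation of $\Gamma'$:
\[
\psi(t_0)\psi(t_0^{-1}) = bc\cdot cb = b c^2 b = b^2 = 1,
\]
and analogously $\psi(t_0^{-1})\psi(t_0)=1$ and the two corresponding identities for $t_1$; then $\psi(s)^2 = c^2 = 1$; and finally
\[
\psi(s t_0 s) = c(bc)c = cb c^2 = cb = \psi(t_0^{-1}), \qquad \psi(s t_1 s) = c(ca)c = c^2 a c = ac = \psi(t_1^{-1}).
\]

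Finally I would check $\psi\circ\varphi = \mathrm{id}_\Gamma$ and $\varphi\circ\psi = \mathrm{id}_{\Gamma'}$ on generators. For the former: $a\mapsto st_1 \mapsto c(ca)=a$, $b\mapsto t_0 s\mapsto (bc)c = b$, $c\mapsto s\mapsto c$. For the latter: $s\mapsto c\mapsto s$; $t_0\mapsto bc\mapsto (t_0 s)s = t_0$; $t_1\mapsto ca\mapsto s(st_1)=t_1$; and the two inverse generators are handled by the same pattern. There is no substantive obstacle here — the whole argument is a bookkeeping exercise in the universal property of presentations. The only mild subtlety worth flagging is that, because the second presentation lists $t_0^{\pm 1}$ and $t_1^{\pm 1}$ as separate generators rather than formal inverses, one must explicitly verify the four inversion relations under $\psi$; all four reduce at once to $b^2=c^2=a^2=1$.
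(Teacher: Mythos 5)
Your proposal is correct and is exactly the argument the paper intends: its proof of this lemma simply says ``one checks that each map is a group homomorphism and the maps are inverses,'' and your write-up supplies precisely those checks (relations preserved in both directions, compositions equal to the identity on generators). The computations all verify, so nothing further is needed.
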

\begin{proof} One checks that each map is a group homomorphism and
the maps are inverses. Consequently each map is a group isomorphism.
\end{proof}

\begin{proposition} 
\label{prop:ZZZisoG}
There exists a group isomorphism 
$\mathbb Z_2 \star \mathbb Z_2 \star \mathbb Z_2 \to G$ that sends
\begin{align} 
a \mapsto ST_1, \qquad \qquad b\mapsto T_0 S, \qquad \qquad c\mapsto S. 
\label{eq:target1}
\end{align}
The inverse isomorphism sends
\begin{align}
T_0 \mapsto bc, \qquad 
T^{-1}_0 \mapsto cb, \qquad 
T_1 \mapsto ca, \qquad 
T^{-1}_1 \mapsto ac, \qquad 
S \mapsto  c.
\label{eq:target2}
\end{align}
\end{proposition}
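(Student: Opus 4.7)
The plan is to exploit the alternative presentation of $\mathbb Z_2 \star \mathbb Z_2 \star \mathbb Z_2$ provided by Lemma \ref{lem:isoiso}. Write $\tilde G$ for the group presented there by $s, t_0^{\pm 1}, t_1^{\pm 1}$ with the relations in (\ref{eq:rels}). I would first build a group isomorphism $\phi : \tilde G \to G$ sending $s \mapsto S$, $t_0 \mapsto T_0$, $t_1 \mapsto T_1$, and then compose with the isomorphism of Lemma \ref{lem:isoiso}. Under this composition the assignment (\ref{eq:send1}) produces $a \mapsto ST_1$, $b \mapsto T_0 S$, $c \mapsto S$, which is exactly (\ref{eq:target1}); in the same way (\ref{eq:send2}) yields (\ref{eq:target2}).

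To see that $\phi$ is a well-defined homomorphism, I would check the defining relations of $\tilde G$ on the images. The relation $s^2 = 1$ corresponds to $S^2 = 1$, which is the last assertion of Lemma \ref{lem:gam1}. The conjugation relations $st_0 s = t_0^{-1}$ and $st_1 s = t_1^{-1}$ correspond to $ST_0 S = T_0^{-1}$ and $ST_1 S = T_1^{-1}$, which are Lemma \ref{lem:alphabeta1}. Surjectivity of $\phi$ is then immediate: by Definition \ref{def:GH} the group $G$ is generated by $S$ together with $T_0^{\pm 1}, T_1^{\pm 1}$.

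The main step is injectivity, where Lemma \ref{lem:free} is essential. The presentation (\ref{eq:rels}) exhibits $\tilde G$ as a semidirect product $F \rtimes \langle s\rangle$, where $F$ is the subgroup generated by $t_0, t_1$ (normal since $st_i s = t_i^{-1}\in F$) and $\langle s\rangle \cong \mathbb Z_2$. Because no relation involving only $t_0, t_1$ is imposed, $F$ is free on $t_0, t_1$, so every element of $\tilde G$ has a unique normal form $w s^{\epsilon}$ with $w$ a freely reduced word in $t_0^{\pm 1}, t_1^{\pm 1}$ and $\epsilon \in \{0,1\}$. Now suppose $g = w s^{\epsilon} \in \ker \phi$; then $\phi(w) S^{\epsilon} = 1$ in $G = N \rtimes H$, and comparing $N$- and $H$-components gives $\phi(w) = 1$ and $\epsilon = 0$. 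But $\phi$ restricts to a surjection from the free group $F$ onto $N$ carrying free generators to $T_0, T_1$, and by Lemma \ref{lem:free} this restriction is an isomorphism. Hence $w = 1$ and $g = 1$, so $\phi$ is injective.

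The main obstacle is the verification that $F$ is free on $t_0, t_1$ inside $\tilde G$; this is a standard consequence of the normal-form argument for semidirect-product presentations (or, equivalently, a Reidemeister--Schreier computation), and is the only step I would write out with care. The remainder of the proof consists of assembling the ingredients already supplied by Lemmas \ref{lem:gam1}, \ref{lem:alphabeta1}, \ref{lem:free}, together with the semidirect-product decomposition $G = N \rtimes H$.
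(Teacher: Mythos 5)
Your proposal is correct and follows essentially the same route as the paper: both pass through the presentation of Lemma \ref{lem:isoiso}, obtain the homomorphism by checking the relations against Lemmas \ref{lem:gam1} and \ref{lem:alphabeta1}, get surjectivity from Definition \ref{def:GH}, and prove injectivity by confining the kernel to the subgroup generated by $t_0^{\pm 1}, t_1^{\pm 1}$ and invoking Lemma \ref{lem:free}. Your extra care about $F$ being free is harmless but not needed: since $N$ is free on $T_0, T_1$ and $F$ is generated by $t_0, t_1$, the map $T_i \mapsto t_i$ already furnishes a two-sided inverse to $\phi|_F$.
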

\begin{proof} For notational convenience we identify the group
$\mathbb Z_2 \star \mathbb Z_2 \star \mathbb Z_2$ with the group
defined in Lemma
\ref{lem:isoiso}, via the isomorphism in
Lemma
\ref{lem:isoiso}.
Comparing 
(\ref{eq:rels}) 
with the relations in
Lemmas
\ref{lem:gam1},
\ref{lem:alphabeta1} we obtain
a surjective group homomorphism 
$\gamma : 
\mathbb Z_2 \star \mathbb Z_2 \star \mathbb Z_2 
\to G$ that sends $s\mapsto S$ and
$t^{\pm 1}_0 \mapsto T^{\pm 1}_0 $ and $t^{\pm 1}_1 \mapsto T^{\pm 1}_1$.
Using the identification
(\ref{eq:send1})
we find that $\gamma$ acts as in
(\ref{eq:target1}).
We show that $\gamma$ is an isomorphism.
Let $\mathcal N$ denote the subgroup of
$\mathbb Z_2 \star \mathbb Z_2 \star \mathbb Z_2 $
 generated by
$t^{\pm 1}_0, t^{\pm 1}_1$.  
From the relations (\ref{eq:rels}) we see that 
$\mathbb Z_2 \star \mathbb Z_2 \star \mathbb Z_2 $
is the union of $\mathcal N$ and $\mathcal N s$.
We have $\gamma (\mathcal N) = N$
and 
$\gamma (\mathcal N s) = N S$. The cosets $N$, $NS$ are disjoint
and $N$ contains the identity, so 
the kernel of $\gamma$ is contained in $\mathcal N$.
This kernel is the identity  
 by Lemma
\ref{lem:free}.
Therefore $\gamma$ is injective and hence an isomorphism.
Line (\ref{eq:target2}) follows from
(\ref{eq:send2}).
\end{proof}

\noindent  We now give our first main result.
For notational convenience define
\begin{align}
C = \frac{q^{-1}BA-qAB}{q^2-q^{-2}}.
\label{eq:C}
\end{align}

\begin{theorem}
\label{thm:z3}
The free product $\mathbb Z_2 \star \mathbb Z_2 \star \mathbb Z_2$
acts on the algebra 
$\mathcal O_q$ as a group of (auto/antiauto)-morphisms
in the following way.
\begin{enumerate}
\item[\rm (i)] The generator $a$ acts as an antiautomorphism 
 that sends
\begin{align}
A \mapsto A + \frac{BC-CB}{q-q^{-1}}, \qquad \qquad 
B \mapsto B, \qquad \qquad C \mapsto C.
\label{eq:actA}
\end{align}
\item[\rm (ii)] The generator $b$ acts as an antiautomorphism that sends
\begin{align}
B \mapsto B + \frac{CA-AC}{q-q^{-1}},\qquad \qquad 
C \mapsto C, \qquad \qquad A \mapsto A.
\label{eq:actB}
\end{align}
\item[\rm (iii)] The generator $c$ acts as an antiautomorphism
that sends
\begin{align}
C \mapsto C + \frac{AB-BA}{q-q^{-1}},\qquad \qquad 
A \mapsto A, \qquad \qquad B \mapsto B.
\label{eq:actC}
\end{align}
\item[\rm (iv)] On $\mathcal O_q$,
\begin{align}
&a= ST_1=T^{-1}_1 S, \qquad \qquad b= T_0 S=ST^{-1}_0, \qquad \qquad c=S,
\label{eq:part1}
\\
&T_0 = bc, \;\quad \qquad  
T^{-1}_0 = cb, \;\quad \qquad 
T_1 = ca, \;\quad \qquad 
T^{-1}_1 = ac.
\label{eq:part2}
\end{align}
\end{enumerate}
\noindent The above
 $\mathbb Z_2 \star \mathbb Z_2 \star \mathbb Z_2$
action is faithful.
\end{theorem}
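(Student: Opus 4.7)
My plan is to define the action through the isomorphism $\gamma$ of Proposition~\ref{prop:ZZZisoG} and then verify the explicit formulas in (i)--(iii) by direct substitution. Proposition~\ref{prop:ZZZisoG} already supplies an isomorphism $\gamma : \mathbb Z_2 \star \mathbb Z_2 \star \mathbb Z_2 \to G$ whose image is a subgroup of ${\rm AAut}(\mathcal O_q)$, so letting $g \in \mathbb Z_2 \star \mathbb Z_2 \star \mathbb Z_2$ act on $\mathcal O_q$ as $\gamma(g)$ produces an action by (auto/antiauto)-morphisms, and this action is faithful because $\gamma$ is injective. Equations (\ref{eq:target1}) and (\ref{eq:target2}) directly yield the identifications $a = ST_1$, $b = T_0 S$, $c = S$ and the inverse expressions for $T_0^{\pm 1}, T_1^{\pm 1}$ in (iv); the further equalities $ST_1 = T_1^{-1} S$ and $T_0 S = S T_0^{-1}$ follow from Lemma~\ref{lem:alphabeta1} by multiplying $S T_i S = T_i^{-1}$ on the right by $S$ and invoking $S^2 = 1$.

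The core of the argument is then the verification of parts (i)--(iii). Each of $ST_1$, $T_0 S$, $S$ involves exactly one antiautomorphism factor and hence is itself an antiautomorphism, so only the explicit formulas need to be checked. For (iii), Lemma~\ref{lem:gam1} gives $S(A) = A$ and $S(B) = B$; applying the order-reversing $S$ to the definition (\ref{eq:C}) of $C$ yields
\begin{align*}
S(C) = \frac{q^{-1}AB - qBA}{q^2-q^{-2}} = C + \frac{AB-BA}{q-q^{-1}},
\end{align*}
after a short simplification that uses $q^2 - q^{-2} = (q - q^{-1})(q + q^{-1})$. For (i), both $T_1$ and $S$ fix $B$, so $a(B) = B$; Lemma~\ref{lem:qcom} gives $T_1(C) = (q^{-1}AB - qBA)/(q^2-q^{-2})$, and a further application of $S$ recovers $C$, so $a(C) = C$. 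The remaining formula $a(A) = A + (BC - CB)/(q-q^{-1})$ is established by applying $S$ to the expression (\ref{eq:bk2}) for $T_1(A)$, reversing the order in each monomial, and checking that the result matches the expansion of $BC - CB$ obtained from (\ref{eq:C}) term by term. Part (ii) is entirely symmetric, with (\ref{eq:bk1}) in place of (\ref{eq:bk2}).

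The main obstacle is purely bookkeeping in the computations behind $a(A)$ and $b(B)$: one must correctly reverse order under $S$ in the three-term expressions from (\ref{eq:bk1})--(\ref{eq:bk2}) and track the scalar factors of $q^{\pm 1}$ and the denominators $q - q^{-1}$ and $q^2 - q^{-2}$. No deeper input is needed beyond Proposition~\ref{prop:ZZZisoG} and Lemmas~\ref{lem:qcom}, \ref{lem:gam1}, \ref{lem:alphabeta1}.
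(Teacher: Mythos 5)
Your proposal is correct and follows essentially the same route as the paper: obtain the action and part (iv) from Proposition~\ref{prop:ZZZisoG} together with Lemmas~\ref{lem:gam1} and \ref{lem:alphabeta1}, then verify (i)--(iii) by direct substitution using (\ref{eq:bk1})--(\ref{eq:bk4}), (\ref{eq:C}) and Lemma~\ref{lem:qcom}. The only difference is that you carry out explicitly the computations the paper dismisses as routine, and your sample calculations (e.g.\ $S(C)=C+\frac{AB-BA}{q-q^{-1}}$ and the match of $S(T_1(A))$ with $A+\frac{BC-CB}{q-q^{-1}}$) check out.
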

\begin{proof} By 
Lemma
\ref{lem:alphabeta1}
and Proposition
\ref{prop:ZZZisoG}, the group
 $\mathbb Z_2 \star \mathbb Z_2 \star \mathbb Z_2$ acts
 faithfully on
 $\mathcal O_q$ as a group of
 (auto/antiauto)-morphisms in a way that satisfies
(\ref{eq:part1}),
(\ref{eq:part2}).
By 
(\ref{eq:part1}),
each of $a,b,c$
acts on $\mathcal O_q$ as an antiautomorphism.
Their actions
(\ref{eq:actA})--(\ref{eq:actC})
are routinely obtained using 
(\ref{eq:bk1})--(\ref{eq:bk4})  and
(\ref{eq:C}),
along with
Lemmas
\ref{lem:qcom},
\ref{lem:gam1}.
\end{proof}

\begin{note} \rm Motivated by Theorem
\ref{thm:z3}, one might conjecture that
there exists an automorphism of $\mathcal O_q$ that
sends $A\mapsto B\mapsto C \mapsto A$. This conjecture is false,
since $A,B$ satisfy the 
$q$-Dolan/Grady relations and
$B,C$ do not. This last assertion can be checked by considering
the actions of $B,C$ on the 4-dimensional $\mathcal O_q$-module given in
the proof of \cite[Lemma~10.8]{uaw}.
\end{note}

\section{The current algebra $\mathcal A_q$}

\noindent 
In the previous section we obtained an action of
$\mathbb Z_2 \star \mathbb Z_2 \star \mathbb Z_2$ on the
$q$-Onsager algebra $\mathcal O_q$. In this section we 
do something similar for the corresponding
current algebra $\mathcal A_q$.
In \cite{BK05} Baseilhac and Koizumi 
introduce $\mathcal A_q$ in order to solve boundary integrable systems
with hidden symmetries related to a coideal subalgebra of
$U_q(\widehat{\mathfrak{sl}}_2)$.  
In
\cite[Definition~3.1]{basnc} Baseilhac and K. Shigechi
give a presentation of  $\mathcal A_q$
by generators and relations.
The generators are 
denoted 
$\mathcal W_{-k}$, $\mathcal W_{k+1}$,  $\mathcal G_{k+1}$, 
 $\mathcal{\tilde G}_{k+1}$,
where $k \in \mathbb N$.
In \cite[Lemma~2.1]{basBel}, Baseilhac and S. Belliard display
some central elements 
$\lbrace \Delta_{k+1}\rbrace_{k \in \mathbb N}$ for 
$\mathcal A_q$.
In \cite[Corollary~3.1]{basBel}, it is shown that $\mathcal A_q$ is generated by
these central elements together with
 $\mathcal W_0, 
\mathcal W_1$.
The elements $\mathcal W_0,\mathcal W_1$
 satisfy the 
$q$-Dolan/Grady relations
\cite[eqn.~(3.7)]{basBel}.
In \cite[Conjecture~2]{basBel} it is conjectured 
that $\mathcal O_q$ is a homomorphic image of $\mathcal A_q$.
We now recall the definition of $\mathcal A_q$.
\begin{definition}\rm
\label{def:Aq}
(See 
\cite{BK05}, \cite[Definition~3.1]{basnc}.)
Define the algebra $\mathcal A_q$
by generators
\begin{align}
\label{eq:4gens}
 \mathcal W_{-k}, \qquad  \mathcal W_{k+1},\qquad  
 \mathcal G_{k+1},
\qquad
\mathcal {\tilde G}_{k+1}, \qquad \qquad  k \in \mathbb N
\end{align}
 and relations
\begin{eqnarray}
&&
 \lbrack \mathcal W_0, \mathcal W_{k+1}\rbrack= 
\lbrack \mathcal W_{-k}, \mathcal W_{1}\rbrack=
({\mathcal{\tilde G}}_{k+1} - \mathcal G_{k+1})/(q+q^{-1}),
\label{eq:3p1}
\\
&&
\lbrack \mathcal W_0, \mathcal G_{k+1}\rbrack_q= 
\lbrack {\mathcal{\tilde G}}_{k+1}, \mathcal W_{0}\rbrack_q= 
\rho  \mathcal W_{-k-1}-\rho 
 \mathcal W_{k+1},
\label{eq:3p2}
\\
&&
\lbrack \mathcal G_{k+1}, \mathcal W_{1}\rbrack_q= 
\lbrack \mathcal W_{1}, {\mathcal {\tilde G}}_{k+1}\rbrack_q= 
\rho  \mathcal W_{k+2}-\rho 
 \mathcal W_{-k},
\label{eq:3p3}
\\
&&
\lbrack \mathcal W_{-k}, \mathcal W_{-\ell}\rbrack=0,  \qquad 
\lbrack \mathcal W_{k+1}, \mathcal W_{\ell+1}\rbrack= 0,
\label{eq:3p4}
\\
&&
\lbrack \mathcal W_{-k}, \mathcal W_{\ell+1}\rbrack+
\lbrack \mathcal W_{k+1}, \mathcal W_{-\ell}\rbrack= 0,
\label{eq:3p5}
\\
&&
\lbrack \mathcal W_{-k}, \mathcal G_{\ell+1}\rbrack+
\lbrack \mathcal G_{k+1}, \mathcal W_{-\ell}\rbrack= 0,
\label{eq:3p6}
\\
&&
\lbrack \mathcal W_{-k}, {\mathcal {\tilde G}}_{\ell+1}\rbrack+
\lbrack {\mathcal {\tilde G}}_{k+1}, \mathcal W_{-\ell}\rbrack= 0,
\label{eq:3p7}
\\
&&
\lbrack \mathcal W_{k+1}, \mathcal G_{\ell+1}\rbrack+
\lbrack \mathcal  G_{k+1}, \mathcal W_{\ell+1}\rbrack= 0,
\label{eq:3p8}
\\
&&
\lbrack \mathcal W_{k+1}, {\mathcal {\tilde G}}_{\ell+1}\rbrack+
\lbrack {\mathcal {\tilde G}}_{k+1}, \mathcal W_{\ell+1}\rbrack= 0,
\label{eq:3p9}
\\
&&
\lbrack \mathcal G_{k+1}, \mathcal G_{\ell+1}\rbrack=0,
\qquad 
\lbrack {\mathcal {\tilde G}}_{k+1}, {\mathcal {\tilde G}}_{\ell+1}\rbrack= 0,
\label{eq:3p10}
\\
&&
\lbrack {\mathcal {\tilde G}}_{k+1}, \mathcal G_{\ell+1}\rbrack+
\lbrack \mathcal G_{k+1}, {\mathcal {\tilde G}}_{\ell+1}\rbrack= 0.
\label{eq:3p11}
\end{eqnarray}
In the above equations $\ell \in \mathbb N$
and $\rho = -(q^2-q^{-2})^2$.
 We are using the notation
 $\lbrack X,Y\rbrack=XY-YX$ and $\lbrack X,Y\rbrack_q=
qXY-q^{-1}YX$.
\end{definition}

\noindent 
There is a redundancy among the generators
(\ref{eq:4gens}),
since we could use (\ref{eq:3p1}) to eliminate 
$\lbrace \mathcal G_{k+1}\rbrace_{k \in \mathbb N}$
or 
$\lbrace {\mathcal {\tilde  G}}_{k+1} \rbrace_{k \in \mathbb N}$
in 
(\ref{eq:3p2})--(\ref{eq:3p11}).
These eliminations yield the 
equations in the next lemma.
\begin{lemma}
\label{lem:elimG}
The following equations hold in $\mathcal A_q$. For
$k \in \mathbb N$,
\begin{align}
&\lbrack \lbrack 
\mathcal W_{k+1}, \mathcal W_0\rbrack_q , 
\mathcal W_0 \rbrack 
=  \lbrack
{\mathcal G}_{k+1},
\mathcal W_0 
\rbrack,
\label{eq:elim1}
\\
&\lbrack
\mathcal W_1, \lbrack \mathcal W_1, \mathcal W_{-k}\rbrack_q
\rbrack  =  \lbrack 
\mathcal W_1, {\mathcal G}_{k+1}
\rbrack,
\label{eq:elim2}
\\
&\lbrack \mathcal W_0, \lbrack \mathcal W_0, \mathcal W_{k+1}\rbrack_q \rbrack 
=  \lbrack \mathcal W_0, {\mathcal {\tilde G}}_{k+1} \rbrack,
\label{eq:elim3}
\\
&\lbrack \lbrack 
\mathcal W_{-k}, \mathcal W_1\rbrack_q , 
\mathcal W_1  \rbrack 
=  \lbrack
{\mathcal {\tilde G}}_{k+1},
\mathcal W_1 
\rbrack.
\label{eq:elim4}
\end{align}
For $k,\ell \in \mathbb N$,
\begin{align}
&\lbrack
\lbrack \mathcal W_0, \mathcal W_{k+1} \rbrack,
\lbrack \mathcal W_0, \mathcal W_{\ell+1} \rbrack \rbrack = 0,
\\
&\lbrack
\lbrack \mathcal W_1, \mathcal W_{-k} \rbrack,
\lbrack \mathcal W_1, \mathcal W_{-\ell} \rbrack \rbrack = 0.
\end{align}
\end{lemma}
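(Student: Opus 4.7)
The strategy for all six identities is direct substitution using (\ref{eq:3p1}). For (\ref{eq:elim1})--(\ref{eq:elim4}), I rewrite (\ref{eq:3p1}) as $\tilde{\mathcal G}_{k+1} = \mathcal G_{k+1} + (q+q^{-1})[\mathcal W_0, \mathcal W_{k+1}]$ (and its $\mathcal W_1$-counterpart). Substituting this into the second equality of (\ref{eq:3p2}) and subtracting the first equality yields
\begin{equation*}
(q+q^{-1})[[\mathcal W_0, \mathcal W_{k+1}], \mathcal W_0]_q = [\mathcal W_0, \mathcal G_{k+1}]_q - [\mathcal G_{k+1}, \mathcal W_0]_q,
\end{equation*}
where the right-hand side equals $(q+q^{-1})[\mathcal W_0, \mathcal G_{k+1}]$ via the one-line identity $[X,Y]_q - [Y,X]_q = (q+q^{-1})[X,Y]$. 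Cancelling $q+q^{-1}$ and flipping signs in the outer bracket gives (\ref{eq:elim1}) after applying the elementary identity $[[X,Y], Y]_q = [[X,Y]_q, Y]$, which is a two-line check. Equations (\ref{eq:elim2}), (\ref{eq:elim3}), (\ref{eq:elim4}) follow by the same recipe, taking (\ref{eq:3p2}) or (\ref{eq:3p3}) as input and invoking the companion identity $[X,[X,Y]]_q = [X,[X,Y]_q]$ whenever the outer $\mathcal W$ sits on the left rather than the right.

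For the two vanishing identities I directly substitute
\begin{equation*}
[\mathcal W_0, \mathcal W_{k+1}] = \frac{\tilde{\mathcal G}_{k+1} - \mathcal G_{k+1}}{q+q^{-1}}
\end{equation*}
(and the analogue for $\ell$) into the outer bracket and expand by bilinearity. Of the four resulting terms, $[\mathcal G_{k+1}, \mathcal G_{\ell+1}]$ and $[\tilde{\mathcal G}_{k+1}, \tilde{\mathcal G}_{\ell+1}]$ vanish by (\ref{eq:3p10}), while the cross terms $[\mathcal G_{k+1}, \tilde{\mathcal G}_{\ell+1}]$ and $[\tilde{\mathcal G}_{k+1}, \mathcal G_{\ell+1}]$ cancel by (\ref{eq:3p11}). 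The second vanishing identity is obtained by the same argument using the other form of (\ref{eq:3p1}) with $\mathcal W_1, \mathcal W_{-k}, \mathcal W_{-\ell}$ in place of $\mathcal W_0, \mathcal W_{k+1}, \mathcal W_{\ell+1}$.

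The only real care needed is bookkeeping of sign conventions and the inner versus outer placement of $\mathcal W_0$ or $\mathcal W_1$; once the two elementary $q$-commutator identities are recorded, every remaining step collapses to routine expansion, so no serious obstacle is expected.
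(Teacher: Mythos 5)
Your proposal is correct, and it follows exactly the route the paper indicates: the paper gives no explicit proof of Lemma \ref{lem:elimG}, stating only (in the preceding paragraph) that the equations arise by using (\ref{eq:3p1}) to eliminate $\mathcal G_{k+1}$ or $\mathcal {\tilde G}_{k+1}$ from (\ref{eq:3p2})--(\ref{eq:3p11}), which is precisely the substitution you carry out. Your two $q$-commutator identities and the cancellation of $q+q^{-1}$ (nonzero since $q$ is not a root of unity) all check out, so the argument is complete.
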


\noindent We now consider some automorphisms of $\mathcal A_q$.

\begin{lemma} 
\label{lem:PhiAq}
{\rm (See 
\cite[Remarks~1,~2]{basBel}.)}
There exists an automorphism $\Omega$ of
$\mathcal A_q$ that sends
$\mathcal W_{-k} \leftrightarrow \mathcal W_{k+1}$ and
$\mathcal G_{k+1} \leftrightarrow {\mathcal {\tilde G}}_{k+1}$ 
for $k \in \mathbb N$.
Moreover $\Omega $ fixes $\Delta_{k+1}$ for $k \in \mathbb N$.
We have $\Omega^2=1$.
\end{lemma}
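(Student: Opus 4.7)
The plan is to define $\Omega$ on the generators (\ref{eq:4gens}) by the prescribed swap $\mathcal W_{-k}\leftrightarrow \mathcal W_{k+1}$, $\mathcal G_{k+1}\leftrightarrow \tilde{\mathcal G}_{k+1}$, and to verify that this assignment preserves each defining relation (\ref{eq:3p1})--(\ref{eq:3p11}). Once this is done, $\Omega$ extends to a well-defined algebra endomorphism of $\mathcal A_q$. Since the prescribed swap is an involution on generators, $\Omega^2$ fixes every generator, hence $\Omega^2 = 1$ on all of $\mathcal A_q$; this simultaneously shows that $\Omega$ is a bijection, and therefore an automorphism.

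The substantive content is the verification of the relations, and the defining relations are organized so that this is essentially bookkeeping. The relations (\ref{eq:3p4}), (\ref{eq:3p5}) swap their two halves (or map to themselves up to antisymmetry of the commutator); the pairs $\{(\ref{eq:3p6}),(\ref{eq:3p8})\}$ and $\{(\ref{eq:3p7}),(\ref{eq:3p9})\}$ are interchanged by $\Omega$; the relations (\ref{eq:3p10}), (\ref{eq:3p11}) are manifestly symmetric under $\mathcal G_{k+1}\leftrightarrow \tilde{\mathcal G}_{k+1}$. For (\ref{eq:3p1}), applying $\Omega$ gives
\begin{align*}
[\mathcal W_1,\mathcal W_{-k}]=[\mathcal W_{k+1},\mathcal W_0]=(\mathcal G_{k+1}-\tilde{\mathcal G}_{k+1})/(q+q^{-1}),
\end{align*}
which is (\ref{eq:3p1}) after multiplying both sides by $-1$ and using $[X,Y]=-[Y,X]$. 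For (\ref{eq:3p2}), applying $\Omega$ gives
\begin{align*}
[\mathcal W_1,\tilde{\mathcal G}_{k+1}]_q=[\mathcal G_{k+1},\mathcal W_1]_q=\rho\,\mathcal W_{k+2}-\rho\,\mathcal W_{-k},
\end{align*}
which is precisely (\ref{eq:3p3}); by the same argument (\ref{eq:3p3}) maps to (\ref{eq:3p2}).

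The remaining claim, that $\Omega$ fixes each central element $\Delta_{k+1}$, is handled by invoking the explicit formula for $\Delta_{k+1}$ from \cite[Lemma~2.1]{basBel}; one checks directly that this formula is invariant under the generator swap defining $\Omega$. The main obstacle, to the extent there is one, is this last step, because it relies on an external explicit expression rather than the abstract presentation; all of the relation-checking for the existence of $\Omega$ as an (anti-)involutive automorphism is routine.
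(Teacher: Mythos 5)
Your proof is essentially correct, but note that the paper does not prove this lemma at all: it is stated with a citation to \cite[Remarks~1,~2]{basBel} and no proof is given. Your direct verification --- define $\Omega$ on the generators (\ref{eq:4gens}) by the swap, check that each of (\ref{eq:3p1})--(\ref{eq:3p11}) is carried to another defining relation, and conclude $\Omega^2=1$ because the swap is an involution on generators --- is the standard argument and parallels exactly what the paper does prove for the antiautomorphism $S$ in Lemma \ref{lem:S2}. One small bookkeeping slip: under $\Omega$ the relation (\ref{eq:3p6}) (which involves $\mathcal W_{-k}$ and $\mathcal G$) is sent to (\ref{eq:3p9}) (which involves $\mathcal W_{k+1}$ and $\tilde{\mathcal G}$), and (\ref{eq:3p7}) is sent to (\ref{eq:3p8}); you paired them the other way. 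This does not affect the argument, since all that matters is that the set of defining relations is permuted. Your treatment of the claim that $\Omega$ fixes $\Delta_{k+1}$ --- deferring to the explicit formula in \cite[Lemma~2.1]{basBel} --- is also exactly how the paper handles the corresponding claim for $S$, so the reliance on that external formula is consistent with the paper's own standard of rigor.
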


\begin{lemma} 
\label{lem:T0Aq}
{\rm (See \cite[Proposition~7.4]{lusztigaut}.)}
\label{conj:W}
There exists an automorphism $T_0$ of 
the algebra $\mathcal A_q$ that acts as follows.
For $k \in \mathbb N$,  $T_0$ sends
\begin{align*}
\mathcal W_{-k} &\mapsto \mathcal W_{-k},
\\
\mathcal W_{k+1} &\mapsto \mathcal W_{k+1} + \frac{q \mathcal W_0^2 \mathcal W_{k+1} -(q+q^{-1})\mathcal W_0 \mathcal W_{k+1}\mathcal W_0+ q^{-1}\mathcal W_{k+1}\mathcal W_0^2}{(q-q^{-1})(q^2-q^{-2})},
\\
\mathcal G_{k+1} &\mapsto \mathcal G_{k+1} + \frac{q \mathcal W_0^2 
\mathcal G_{k+1} -(q+q^{-1})\mathcal W_0 \mathcal G_{k+1}\mathcal W_0+ q^{-1}\mathcal G_{k+1}
\mathcal W_0^2}{(q-q^{-1})(q^2-q^{-2})} = \mathcal{\tilde G}_{k+1},
\\
\mathcal {\tilde G}_{k+1} &\mapsto \mathcal {\tilde G}_{k+1} + \frac{q \mathcal  W_0^2 \mathcal {\tilde G}_{k+1} -
(q+q^{-1})\mathcal W_0\mathcal {\tilde G}_{k+1}\mathcal W_0+ q^{-1}\mathcal {\tilde G}_{k+1}\mathcal W_0^2}{(q-q^{-1})(q^2-q^{-2})},
\\
\Delta_{k+1} &\mapsto \Delta_{k+1}.
\end{align*} 
Moreover $T_0^{-1}$ sends
\begin{align*}
\mathcal W_{-k} &\mapsto \mathcal W_{-k},
\\
\mathcal W_{k+1} &\mapsto \mathcal W_{k+1} + \frac{q^{-1} \mathcal W_0^2\mathcal W_{k+1} -(q+q^{-1})\mathcal W_0 \mathcal W_{k+1}\mathcal W_0+ q \mathcal 
W_{k+1}\mathcal W_0^2}{(q-q^{-1})(q^2-q^{-2})},
\\
\mathcal G_{k+1} &\mapsto \mathcal G_{k+1} + \frac{q^{-1} \mathcal W_0^2
\mathcal G_{k+1} -(q+q^{-1})\mathcal W_0\mathcal G_{k+1}\mathcal W_0+
q\mathcal G_{k+1}\mathcal W_0^2}{(q-q^{-1})(q^2-q^{-2})},
\\
\mathcal {\tilde G}_{k+1} &\mapsto \mathcal{\tilde G}_{k+1} + 
\frac{q^{-1} \mathcal W_0^2 \mathcal {\tilde G}_{k+1} -(q+q^{-1})
\mathcal W_0\mathcal{\tilde G}_{k+1}\mathcal W_0+ q\mathcal {\tilde G}_{k+1}
\mathcal W_0^2}{(q-q^{-1})(q^2-q^{-2})}
= \mathcal G_{k+1},
\\
\Delta_{k+1} &\mapsto \Delta_{k+1}.
\end{align*} 
\end{lemma}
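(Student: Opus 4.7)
The plan is to define $T_0$ on the free algebra generated by the elements in (\ref{eq:4gens}) using the stated formulas, verify that it descends to an algebra homomorphism $T_0: \mathcal{A}_q \to \mathcal{A}_q$ by checking that each defining relation (\ref{eq:3p1})--(\ref{eq:3p11}) is preserved, then define $T_0^{-1}$ analogously (with $q \leftrightarrow q^{-1}$ in the numerators) and confirm $T_0 T_0^{-1} = T_0^{-1} T_0 = 1$ on generators.

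A useful first step is to rewrite the stated formulas in a compact form. Direct expansion shows
\begin{align*}
q\mathcal{W}_0^2 Y - (q+q^{-1})\mathcal{W}_0 Y \mathcal{W}_0 + q^{-1} Y \mathcal{W}_0^2 = \lbrack \mathcal{W}_0, \lbrack \mathcal{W}_0, Y\rbrack_q \rbrack,
\end{align*}
so the proposed $T_0$ sends $Y \mapsto Y + (q-q^{-1})^{-1}(q^2-q^{-2})^{-1}\lbrack \mathcal{W}_0, \lbrack \mathcal{W}_0, Y\rbrack_q \rbrack$ for each $Y \in \lbrace \mathcal{W}_{k+1}, \mathcal{G}_{k+1}, \mathcal{\tilde G}_{k+1}\rbrace$, while fixing $\mathcal{W}_{-k}$. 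Via the elimination identities in Lemma~\ref{lem:elimG}, when $Y = \mathcal{W}_{k+1}$ the inner double bracket equals $\lbrack \mathcal{W}_0, \mathcal{\tilde G}_{k+1}\rbrack$, with analogous reductions in the other cases. A second useful observation is that, because $\mathcal{W}_0, \mathcal{W}_1$ satisfy the $q$-Dolan/Grady relations in $\mathcal{A}_q$, the restriction of the proposed $T_0$ to the subalgebra they generate is identified with the Baseilhac--Kolb automorphism (\ref{eq:bk1}) of $\mathcal{O}_q$, so relation-preservation on that subalgebra is automatic. The claimed identity $T_0(\mathcal{G}_{k+1}) = \mathcal{\tilde G}_{k+1}$ then becomes a consistency check on (\ref{eq:3p1}) rather than a second, independent, definition.

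The verifications now split according to which generators appear. Relations in which $\mathcal{W}_0$ or $\mathcal{W}_{-k}$ is one of the arguments reduce, after applying $T_0$, to statements about $\lbrack \mathcal{W}_0, \cdot \rbrack$ combined with the bracket identities in Lemma~\ref{lem:elimG}; these are handled rather directly. For $T_0^{-1}$ the computation is symmetric, and composing $T_0$ with the candidate $T_0^{-1}$ yields the identity on each of the four families of generators after a telescoping cancellation of the added double commutators. Fixing $\Delta_{k+1}$ is checked using its explicit presentation from \cite[Lemma~2.1]{basBel}.

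The main obstacle is the double-indexed family of relations (\ref{eq:3p5})--(\ref{eq:3p11}), which involves two indices $k, \ell$ and no privileged generator that $T_0$ leaves fixed. After applying $T_0$, each side acquires corrective terms of the form $\lbrack \mathcal{W}_0, \lbrack \mathcal{W}_0, \cdot\rbrack_q \rbrack$, and matching them requires an induction on $k+\ell$: the base case is supplied by the $\mathcal{O}_q$-subalgebra generated by $\mathcal{W}_0, \mathcal{W}_1$, and the inductive step is driven by the centrality of the $\Delta_{k+1}$ together with repeated use of the elimination identities in Lemma~\ref{lem:elimG}. This inductive bookkeeping, rather than any single conceptual step, is where the bulk of the work in \cite[Proposition~7.4]{lusztigaut} is spent.
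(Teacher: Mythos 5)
First, a framing point: the paper does not prove this lemma at all --- it is imported wholesale from \cite[Proposition~7.4]{lusztigaut} --- so the comparison is with that external argument rather than with anything in the present text. Your overall strategy (define $T_0$ on the generators by the stated formulas, check that the defining relations (\ref{eq:3p1})--(\ref{eq:3p11}) are preserved, exhibit the inverse by the $q\leftrightarrow q^{-1}$ symmetry) is the standard and correct one, and several of your preliminary observations do check out: the rewriting $q\mathcal W_0^2Y-(q+q^{-1})\mathcal W_0Y\mathcal W_0+q^{-1}Y\mathcal W_0^2=\lbrack\mathcal W_0,\lbrack\mathcal W_0,Y\rbrack_q\rbrack$ is correct, and the asserted identity $T_0(\mathcal G_{k+1})=\mathcal {\tilde G}_{k+1}$ does reduce to (\ref{eq:3p1}) once one uses (\ref{eq:3p2}) and (\ref{eq:3p4}) to get $\lbrack\mathcal W_0,\lbrack\mathcal W_0,\mathcal G_{k+1}\rbrack_q\rbrack=(q^2-q^{-2})^2\lbrack\mathcal W_0,\mathcal W_{k+1}\rbrack$.

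There is, however, a genuine gap: the heart of the lemma --- verifying that the images of the generators under the proposed map still satisfy (\ref{eq:3p1})--(\ref{eq:3p11}) --- is never carried out, and the inductive scheme you sketch for the two-index relations does not get off the ground as described. You propose an induction on $k+\ell$ whose base case is ``supplied by the $\mathcal O_q$-subalgebra generated by $\mathcal W_0,\mathcal W_1$''; but already at $k=\ell=0$ the relations (\ref{eq:3p6})--(\ref{eq:3p11}) involve $\mathcal G_1$ and $\mathcal {\tilde G}_1$, which are not known (and, in view of Conjecture~\ref{conj:3}, not expected) to lie in $\langle\mathcal W_0,\mathcal W_1\rangle$, so the base case is not actually covered by identifying $T_0$ with the Baseilhac--Kolb automorphism of $\mathcal O_q$. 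You also assert that the inductive step is ``driven by the centrality of the $\Delta_{k+1}$,'' but the $\Delta_{k+1}$ do not occur in the defining relations at all, no inductive hypothesis is formulated, and no mechanism is given by which centrality would propagate relation-preservation from lower to higher indices; not even one representative two-index relation (say $\lbrack T_0(\mathcal W_{k+1}),T_0(\mathcal W_{\ell+1})\rbrack=0$, from (\ref{eq:3p4})) is verified. As it stands the proposal is a plausible plan together with an explicit acknowledgement that ``the bulk of the work'' lies elsewhere, which is not a proof.
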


\begin{definition}
\label{def:Phi}
\rm  
Define $T_1=\Omega T_0 \Omega $, where 
$\Omega $ is from
Lemma
\ref{lem:PhiAq}
and
$T_0$ is from Lemma 
\ref{lem:T0Aq}.
 By construction $T_1$ is
an automorphism of the algebra $\mathcal A_q$.
\end{definition}

\begin{lemma} 
\label{conj:Ws}
For $k \in \mathbb N$,  $T_1$ sends
\begin{align*}
\mathcal W_{k+1} &\mapsto \mathcal W_{k+1},
\\
\mathcal W_{-k} &\mapsto \mathcal W_{-k} + \frac{q \mathcal W_1^2 
\mathcal W_{-k} -(q+q^{-1})\mathcal W_1 \mathcal W_{-k}\mathcal W_1+
q^{-1}\mathcal W_{-k}\mathcal W_1^2}{(q-q^{-1})(q^2-q^{-2})},
\\
\mathcal G_{k+1} &\mapsto \mathcal G_{k+1} + \frac{q \mathcal W_1^2 
\mathcal G_{k+1} -(q+q^{-1})\mathcal W_1 \mathcal G_{k+1}\mathcal W_1+ q^{-1}\mathcal G_{k+1}
\mathcal W_1^2}{(q-q^{-1})(q^2-q^{-2})},
\\
\mathcal {\tilde G}_{k+1} &\mapsto \mathcal {\tilde G}_{k+1} +
\frac{q \mathcal  W_1^2 \mathcal {\tilde G}_{k+1} -
(q+q^{-1})\mathcal W_1\mathcal {\tilde G}_{k+1}\mathcal W_1
+ q^{-1}\mathcal {\tilde G}_{k+1}\mathcal W_1^2}{(q-q^{-1})(q^2-q^{-2})}
=\mathcal{G}_{k+1},
\\
\Delta_{k+1} &\mapsto \Delta_{k+1}.
\end{align*} 
Moreover $T_1^{-1}$ sends
\begin{align*}
\mathcal W_{k+1} &\mapsto \mathcal W_{k+1},
\\
\mathcal W_{-k} &\mapsto \mathcal W_{-k} + 
\frac{q^{-1} \mathcal W_1^2\mathcal W_{-k} -
(q+q^{-1})\mathcal W_1 \mathcal W_{-k}\mathcal W_1+ q \mathcal 
W_{-k}\mathcal W_1^2}{(q-q^{-1})(q^2-q^{-2})},
\\
\mathcal G_{k+1} &\mapsto \mathcal G_{k+1} +
\frac{q^{-1} \mathcal W_1^2
\mathcal G_{k+1} -(q+q^{-1})\mathcal W_1\mathcal G_{k+1}\mathcal W_1+
q\mathcal G_{k+1}\mathcal W_1^2}{(q-q^{-1})(q^2-q^{-2})}
= \mathcal {\tilde G}_{k+1},
\\
\mathcal {\tilde G}_{k+1} &\mapsto \mathcal{\tilde G}_{k+1} + 
\frac{q^{-1} \mathcal W_1^2 \mathcal {\tilde G}_{k+1} -(q+q^{-1})
\mathcal W_1\mathcal{\tilde G}_{k+1}\mathcal W_1+
q\mathcal {\tilde G}_{k+1}
\mathcal W_1^2}{(q-q^{-1})(q^2-q^{-2})},
\\
\Delta_{k+1} &\mapsto \Delta_{k+1}.
\end{align*} 
\end{lemma}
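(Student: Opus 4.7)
The plan is to unwind the definition $T_1 = \Omega T_0 \Omega$ from Definition \ref{def:Phi}, using Lemma \ref{lem:PhiAq} to describe $\Omega$ and Lemma \ref{lem:T0Aq} to describe $T_0$. Since $\Omega^2 = 1$, one also has $T_1^{-1} = \Omega T_0^{-1} \Omega$. Thus for any generator $X$ of $\mathcal A_q$, I would compute $T_1(X)$ by applying $\Omega$, then $T_0$, then $\Omega$ again, exploiting that $\Omega$ is an algebra automorphism and hence distributes over sums and products.

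The key simplification is that $\Omega$ acts on the relevant generators by the involution $\mathcal W_0 \leftrightarrow \mathcal W_1$, $\mathcal W_{-k} \leftrightarrow \mathcal W_{k+1}$, $\mathcal G_{k+1} \leftrightarrow \mathcal{\tilde G}_{k+1}$ and fixes $\Delta_{k+1}$. Consequently, applying $\Omega$ to any of the $T_0$-formulas in Lemma \ref{lem:T0Aq} has the uniform effect of replacing $\mathcal W_0$ by $\mathcal W_1$, relabelling $-k \leftrightarrow k+1$ on the $\mathcal W$'s that appear, and swapping $\mathcal G_{k+1} \leftrightarrow \mathcal{\tilde G}_{k+1}$. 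For instance, to verify the formula for $T_1(\mathcal W_{-k})$, I would note that $\Omega(\mathcal W_{-k}) = \mathcal W_{k+1}$, apply the $T_0$-formula for $\mathcal W_{k+1}$ from Lemma \ref{lem:T0Aq}, and then apply $\Omega$ to the resulting rational expression in $\mathcal W_{k+1}$ and $\mathcal W_0$; the output is precisely the desired formula in $\mathcal W_{-k}$ and $\mathcal W_1$. The verification that $T_1(\mathcal W_{k+1}) = \mathcal W_{k+1}$ is even simpler: $\Omega$ sends $\mathcal W_{k+1}$ to $\mathcal W_{-k}$, which $T_0$ fixes, and then $\Omega$ returns $\mathcal W_{k+1}$. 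The remaining two formulas for $T_1$, and all four formulas for $T_1^{-1}$, are handled by the same conjugation, with $T_0^{-1}$ in place of $T_0$ in the inverse case.

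The two ``equals'' assertions, $T_1(\mathcal{\tilde G}_{k+1}) = \mathcal G_{k+1}$ and $T_1^{-1}(\mathcal G_{k+1}) = \mathcal{\tilde G}_{k+1}$, require a brief extra observation. For the first, I would use $\Omega(\mathcal{\tilde G}_{k+1}) = \mathcal G_{k+1}$, then $T_0(\mathcal G_{k+1}) = \mathcal{\tilde G}_{k+1}$ from Lemma \ref{lem:T0Aq}, and finally $\Omega(\mathcal{\tilde G}_{k+1}) = \mathcal G_{k+1}$; the second is analogous using $T_0^{-1}(\mathcal{\tilde G}_{k+1}) = \mathcal G_{k+1}$. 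The preservation $T_1(\Delta_{k+1}) = \Delta_{k+1}$ is immediate from $\Omega(\Delta_{k+1}) = \Delta_{k+1} = T_0(\Delta_{k+1})$. No step is a genuine obstacle; the whole argument is a mechanical chase through the definitions, and the only place demanding care is tracking the parallel swaps $\mathcal W_0 \leftrightarrow \mathcal W_1$ and $\mathcal G_{k+1} \leftrightarrow \mathcal{\tilde G}_{k+1}$ when $\Omega$ is applied to the inner $T_0$-formula.
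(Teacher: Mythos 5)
Your proposal is correct and coincides with the paper's proof, which simply cites Definition \ref{def:Phi} ($T_1=\Omega T_0\Omega$, hence $T_1^{-1}=\Omega T_0^{-1}\Omega$ since $\Omega^2=1$) together with the $T_0^{\pm 1}$ formulas of Lemma \ref{lem:T0Aq}; you have just written out the mechanical conjugation that the paper leaves to the reader. The bookkeeping of the swaps $\mathcal W_0\leftrightarrow\mathcal W_1$, $\mathcal W_{-k}\leftrightarrow\mathcal W_{k+1}$, $\mathcal G_{k+1}\leftrightarrow\mathcal{\tilde G}_{k+1}$ is exactly right.
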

\begin{proof}  Use Lemma
\ref{conj:W}
and Definition
\ref{def:Phi}.
\end{proof}

\noindent We have been discussing automorphisms of $\mathcal A_q$.
We now consider antiautomorphisms of $\mathcal A_q$.

\begin{lemma}
\label{lem:S2}
There exists an antiautomorphism $S$ of
$\mathcal A_q$ that sends
\begin{align*}
\mathcal W_{-k} \mapsto \mathcal W_{-k},
\qquad 
\mathcal W_{k+1} \mapsto \mathcal W_{k+1},
\qquad 
\mathcal G_{k+1} \mapsto {\mathcal {\tilde G}}_{k+1},
\qquad 
{\mathcal {\tilde G}}_{k+1} \mapsto \mathcal G_{k+1}
\end{align*}
For $k \in \mathbb N$.
Moreover $S$ fixes $\Delta_{k+1}$ for $k \in \mathbb N$.
We have $S^2=1$.
\end{lemma}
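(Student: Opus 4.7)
The plan is to construct $S$ directly via the prescribed action on the defining generators (\ref{eq:4gens}) of $\mathcal A_q$, then verify each claim of the lemma in turn.

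First, I define $S$ as an $\mathbb F$-linear antihomomorphism of the free algebra on the symbols (\ref{eq:4gens}), acting as stated on generators, and descend it to $\mathcal A_q$ by checking that the ideal of defining relations (\ref{eq:3p1})--(\ref{eq:3p11}) is stable. The two identities
\begin{align*}
S([X,Y]) = -[S(X),S(Y)], \qquad S([X,Y]_q) = [S(Y), S(X)]_q
\end{align*}
reduce the verification to a matching problem: one applies $S$ to each relation and locates the image (possibly up to sign) elsewhere in the list. For instance (\ref{eq:3p1}) is sent to the negative of itself; the two halves of (\ref{eq:3p2}) swap under $S$, as do those of (\ref{eq:3p3}); (\ref{eq:3p6}) and (\ref{eq:3p7}) exchange up to sign, as do (\ref{eq:3p8}) and (\ref{eq:3p9}); and (\ref{eq:3p4}), (\ref{eq:3p5}), (\ref{eq:3p10}), (\ref{eq:3p11}) each map to themselves up to sign. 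Working through the list exhaustively establishes existence of $S$.

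Next, $S^2=1$ follows at once: $S^2$ is an automorphism of $\mathcal A_q$, and it fixes every generator in (\ref{eq:4gens}) (the families $\mathcal W_{-k}$ and $\mathcal W_{k+1}$ are fixed already by $S$, while $\mathcal G_{k+1}$ and $\tilde{\mathcal G}_{k+1}$ swap and return after two applications). Since the generators are fixed, $S^2$ is the identity on $\mathcal A_q$.

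Finally, to show $S(\Delta_{k+1}) = \Delta_{k+1}$ I invoke the explicit polynomial formula for the central element $\Delta_{k+1}$ given in \cite[Lemma~2.1]{basBel}. One substitutes the action of $S$ into that formula (reversing the order of all factors and swapping $\mathcal G_{k+1} \leftrightarrow \tilde{\mathcal G}_{k+1}$) and shows the result coincides with the original expression. This is the main obstacle of the proof, since unlike the first two steps it depends on the specific combinatorial shape of $\Delta_{k+1}$ rather than a purely formal manipulation; the centrality of $\Delta_{k+1}$ helps, as it permits commuting factors past one another to realign after the order-reversal. I would model the argument on the verification that $\Omega$ fixes $\Delta_{k+1}$ (Lemma~\ref{lem:PhiAq}), adapted to the antiautomorphism setting.
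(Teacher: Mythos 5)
Your proposal is correct and follows essentially the same route as the paper: the paper's proof asserts existence of $S$ ``by the form of the defining relations (\ref{eq:3p1})--(\ref{eq:3p11})'', deduces $S^2=1$ from the fact that $S^2$ is an automorphism fixing all the generators (\ref{eq:4gens}), and cites the explicit form of $\Delta_{k+1}$ in \cite[Lemma~2.1]{basBel} for the last claim. You have merely spelled out the relation-by-relation matching (your identities for $S$ applied to $[\,\cdot\,,\cdot\,]$ and $[\,\cdot\,,\cdot\,]_q$ and the pairings (\ref{eq:3p2}) halves swapping, (\ref{eq:3p6})$\leftrightarrow$(\ref{eq:3p7}), (\ref{eq:3p8})$\leftrightarrow$(\ref{eq:3p9}), the rest going to their negatives) that the paper leaves implicit.
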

\begin{proof} The antiautomorphism $S$ exists by the form of the defining
relations 
(\ref{eq:3p1})--(\ref{eq:3p11})
for $\mathcal A_q$.
The map $S^2$ is an automorphism of $\mathcal A_q$ that fixes
 $\mathcal W_{-k}$, $\mathcal W_{k+1}$,  $\mathcal G_{k+1}$,
 $\mathcal {\tilde G}_{k+1}$ for $k \in \mathbb N$.
 These elements generate $\mathcal A_q$, so $S^2=1$. For 
$k \in \mathbb N$ the map $S$ fixes $\Delta_{k+1}$
by the form of $\Delta_{k+1}$ given in
\cite[Lemma~2.1]{basBel}.
\end{proof}

\begin{lemma} 
\label{lem:ST2}
For the algebra $\mathcal A_q$,
\begin{align}
S T_0 S = T^{-1}_0, \qquad \qquad
S T_1 S = T^{-1}_1.
\label{eq:gam2}
\end{align}
\end{lemma}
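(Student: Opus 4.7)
The plan is to mimic the proof of Lemma \ref{lem:alphabeta1}. Both sides of each equation in (\ref{eq:gam2}) are automorphisms of $\mathcal A_q$: on the left, $ST_0S$ is the composition of an antiautomorphism, an automorphism, and an antiautomorphism, which reverses the product twice and hence preserves it. Since $\mathcal A_q$ is generated by $\{\mathcal W_{-k},\mathcal W_{k+1},\mathcal G_{k+1},\tilde{\mathcal G}_{k+1}\}_{k \in \mathbb N}$ (together with the central $\Delta_{k+1}$ for bookkeeping), it suffices to verify that the two automorphisms agree on these generators.

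For the first equation $ST_0S=T_0^{-1}$, I would evaluate $ST_0S$ on each type of generator using Lemma \ref{lem:S2} and Lemma \ref{lem:T0Aq}, and compare with the formulas for $T_0^{-1}$ from Lemma \ref{lem:T0Aq}. The generator $\mathcal W_{-k}$ is fixed by both maps. For $\mathcal W_{k+1}$: applying $S$ fixes it, then $T_0$ produces the three-term numerator $q\mathcal W_0^2\mathcal W_{k+1}-(q+q^{-1})\mathcal W_0\mathcal W_{k+1}\mathcal W_0+q^{-1}\mathcal W_{k+1}\mathcal W_0^2$, and then the final application of $S$ (which fixes $\mathcal W_0$ and $\mathcal W_{k+1}$ individually but reverses the order in each monomial) swaps the outer two terms, producing exactly the numerator $q^{-1}\mathcal W_0^2\mathcal W_{k+1}-(q+q^{-1})\mathcal W_0\mathcal W_{k+1}\mathcal W_0+q\mathcal W_{k+1}\mathcal W_0^2$ that defines $T_0^{-1}(\mathcal W_{k+1})$. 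The cases of $\mathcal G_{k+1}$ and $\tilde{\mathcal G}_{k+1}$ are essentially identical, with the added feature that $S$ swaps $\mathcal G_{k+1}\leftrightarrow\tilde{\mathcal G}_{k+1}$; this swap is exactly compensated by the equalities in Lemma \ref{lem:T0Aq} identifying the $T_0$-image of $\mathcal G_{k+1}$ with $\tilde{\mathcal G}_{k+1}$ and the $T_0^{-1}$-image of $\tilde{\mathcal G}_{k+1}$ with $\mathcal G_{k+1}$. Both maps fix $\Delta_{k+1}$.

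For the second equation $ST_1S=T_1^{-1}$, rather than repeat the verification, I would leverage Definition \ref{def:Phi}, which gives $T_1=\Omega T_0\Omega$. The key auxiliary fact is that $S$ and $\Omega$ commute. This is checked on generators: both $S\Omega$ and $\Omega S$ send $\mathcal W_{-k}\leftrightarrow\mathcal W_{k+1}$ and fix each of $\mathcal G_{k+1}$ and $\tilde{\mathcal G}_{k+1}$ (because $\Omega$ swaps $\mathcal G\leftrightarrow\tilde{\mathcal G}$ and so does $S$). Once $S\Omega=\Omega S$ is known,
\begin{align*}
ST_1S=S\Omega T_0\Omega S=\Omega(ST_0S)\Omega=\Omega T_0^{-1}\Omega=(\Omega T_0\Omega)^{-1}=T_1^{-1}.
\end{align*}

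The main obstacle is purely a matter of careful bookkeeping in the first equation, tracking how the antiautomorphism $S$ reverses the order of the monomials of length three appearing in the $T_0$-formulas and how this reversal converts the coefficient pair $(q,q^{-1})$ into $(q^{-1},q)$. Nothing genuinely new is required beyond Lemmas \ref{lem:T0Aq} and \ref{lem:S2}; reducing the $T_1$ case to the $T_0$ case via the commutation of $S$ with $\Omega$ avoids duplicating this calculation.
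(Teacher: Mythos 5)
Your proposal is correct and follows essentially the same route as the paper, whose proof is simply ``similar to the proof of Lemma \ref{lem:alphabeta1}'': check that both sides are automorphisms and that they agree on the generators, using the explicit formulas in Lemmas \ref{lem:T0Aq} and \ref{lem:S2}. Your reduction of the second identity to the first via $S\Omega=\Omega S$ and $T_1=\Omega T_0\Omega$ is a small, valid economy over repeating the generator-by-generator check, but it does not change the substance of the argument.
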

\begin{proof} Similar to the proof of
Lemma \ref{lem:alphabeta1}.
\end{proof}

\noindent We now obtain our second main result.
Recall the free product 
$\mathbb Z_2 \star \mathbb Z_2 \star \mathbb Z_2$
from above Lemma
\ref{lem:isoiso}.
For $k \in \mathbb N$ define
\begin{align}
\mathcal W'_{-k} = \mathcal W_{k+1}, \qquad \qquad
\mathcal W''_{-k} = -\frac{{\mathcal{ \tilde G}}_{k+1}}{ q^2-q^{-2}}.
\label{eq:prime}
\end{align}
\noindent Note by 
(\ref{eq:3p1}),
(\ref{eq:3p7}),
(\ref{eq:3p9})
that
\begin{align}
\lbrack \mathcal W_{-k}, \mathcal W'_0\rbrack = 
\lbrack \mathcal W_{0}, \mathcal W'_{-k}\rbrack,
\qquad 
\lbrack \mathcal W'_{-k}, \mathcal W''_0\rbrack = 
\lbrack \mathcal W'_{0}, \mathcal W''_{-k}\rbrack,
\qquad 
\lbrack \mathcal W''_{-k}, \mathcal W_0\rbrack = 
\lbrack \mathcal  W''_{0}, \mathcal W_{-k}\rbrack.
\label{eq:helpful}
\end{align}

\begin{theorem} 
\label{thm:secondres}
The free product $\mathbb Z_2 \star \mathbb Z_2 \star \mathbb Z_2$
acts on the algebra 
$\mathcal A_q$ as a group of (auto/antiauto)-morphisms
in the following way.
\begin{enumerate}
\item[\rm (i)] The generator $a$ acts as an antiautomorphism 
 that sends
\begin{align*}
\mathcal W_{-k} &\mapsto \mathcal W_{-k} + \frac{\lbrack \mathcal W'_{-k}, \mathcal W''_0\rbrack}{q-q^{-1}},
\\
\mathcal W'_{-k} &\mapsto \mathcal W'_{-k},
\\
\mathcal W''_{-k} &\mapsto \mathcal W''_{-k},
\\
\Delta_{k+1} &\mapsto \Delta_{k+1}.
\end{align*}
\item[\rm (ii)] The generator $b$ acts as an antiautomorphism 
that sends
\begin{align*}
\mathcal W'_{-k} &\mapsto \mathcal W'_{-k} + \frac{\lbrack \mathcal W''_{-k},
\mathcal W_0\rbrack}{q-q^{-1}},
\\
\mathcal W''_{-k} &\mapsto \mathcal W''_{-k},
\\
\mathcal W_{-k} &\mapsto \mathcal W_{-k},
\\
\Delta_{k+1} &\mapsto \Delta_{k+1}.
\end{align*}
\item[\rm (iii)] The generator $c$ acts as an antiautomorphism that 
sends
\begin{align*}
\mathcal W''_{-k} &\mapsto \mathcal W''_{-k} + \frac{\lbrack \mathcal W_{-k},
\mathcal W'_0\rbrack}{q-q^{-1}},
\\
\mathcal W_{-k} &\mapsto \mathcal W_{-k},
\\
\mathcal W'_{-k} &\mapsto \mathcal W'_{-k},
\\
\Delta_{k+1} &\mapsto \Delta_{k+1}.
\end{align*}
\item[\rm (iv)] On $\mathcal A_q$,
\begin{align}
&a= ST_1=T^{-1}_1 S, \qquad \qquad b= T_0 S=ST^{-1}_0, \qquad \qquad c=S,
\label{eq:Part1}
\\
&T_0 = bc, \;\quad \qquad  
T^{-1}_0 = cb, \;\quad \qquad 
T_1 = ca, \;\quad \qquad 
T^{-1}_1 = ac.
\label{eq:Part2}
\end{align}
\end{enumerate}
\end{theorem}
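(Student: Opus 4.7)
The plan is to follow the strategy used for Theorem \ref{thm:z3} very closely, since the ingredients on $\mathcal A_q$ mirror those on $\mathcal O_q$. First, I would establish the group action abstractly. Lemmas \ref{lem:S2} and \ref{lem:ST2} supply exactly the defining relations required by the analog of Proposition \ref{prop:ZZZisoG}: one obtains a group homomorphism from $\mathbb Z_2 \star \mathbb Z_2 \star \mathbb Z_2$ (in the presentation of Lemma \ref{lem:isoiso}) to ${\rm AAut}(\mathcal A_q)$ by sending $s\mapsto S$, $t_i^{\pm 1}\mapsto T_i^{\pm 1}$. Under the identification (\ref{eq:send1}) this yields assertion (iv), and immediately shows that each of $a,b,c$ acts as an antiautomorphism (being a product of one antiautomorphism and either zero or one automorphism). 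Since $T_0$, $T_1$, $S$ each fix every $\Delta_{k+1}$ (Lemmas \ref{lem:PhiAq}, \ref{lem:T0Aq}, \ref{lem:S2} together with Definition \ref{def:Phi}), so do $a,b,c$.

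Next, I would verify the assertions on $\mathcal W_{-k}$, $\mathcal W'_{-k}$, $\mathcal W''_{-k}$ case by case. The simplest is (iii) for $c=S$: Lemma \ref{lem:S2} immediately gives $S(\mathcal W_{-k})=\mathcal W_{-k}$ and $S(\mathcal W'_{-k})=\mathcal W'_{-k}$, and the difference $S(\mathcal W''_{-k})-\mathcal W''_{-k}=(\tilde{\mathcal G}_{k+1}-\mathcal G_{k+1})/(q^2-q^{-2})$ is seen to equal $[\mathcal W_{-k},\mathcal W'_0]/(q-q^{-1})$ directly from (\ref{eq:3p1}).

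For $a=ST_1$ in (i) and $b=T_0S$ in (ii), I would dispose of the two fixed components first: Lemmas \ref{conj:Ws} and \ref{lem:T0Aq} show that $T_1$ (respectively $T_0$) sends $\tilde{\mathcal G}_{k+1}\mapsto \mathcal G_{k+1}$, and this swap is undone by $S$, giving $a(\mathcal W'_{-k})=\mathcal W'_{-k}$ and $a(\mathcal W''_{-k})=\mathcal W''_{-k}$, with analogous statements for $b$. The nontrivial computations are $a(\mathcal W_{-k})$ and $b(\mathcal W'_{-k})$: applying $T_1$ (respectively $T_0$) followed by $S$ produces a cubic polynomial of the same shape as the rational expressions in Lemmas \ref{conj:Ws} (respectively \ref{lem:T0Aq}), with the two outer monomials reordered by the antiautomorphism. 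This polynomial is then identified with the claimed commutator $[\mathcal W'_{-k},\mathcal W''_0]/(q-q^{-1})$ (respectively $[\mathcal W''_{-k},\mathcal W_0]/(q-q^{-1})$) by first applying the symmetry (\ref{eq:helpful}) to move the primed and doubly-primed generators onto $\mathcal W_1$ or $\mathcal W_0$, and then invoking (\ref{eq:elim4}) (respectively (\ref{eq:elim3})) from Lemma \ref{lem:elimG} to expand the result as the required polynomial.

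The main obstacle is the bookkeeping in this last step: aligning the $q$-scalar coefficients arising from $T_0^{\pm 1}$, $T_1^{\pm 1}$ with the sign changes and order reversals introduced by $S$, and then recognizing the resulting cubic expressions in $\mathcal W_0, \mathcal W_1, \mathcal W_{-k}, \mathcal W_{k+1}$ as the commutators $[\mathcal W'_{-k},\mathcal W''_0]$, $[\mathcal W''_{-k},\mathcal W_0]$, $[\mathcal W_{-k},\mathcal W'_0]$ via (\ref{eq:elim1})--(\ref{eq:elim4}) and (\ref{eq:helpful}). There is no conceptual difficulty beyond this calculation, and the symmetric roles of $(\mathcal W_{-k},\mathcal W'_{-k},\mathcal W''_{-k})$ mean that verifying one of (i), (ii) provides a template for the other.
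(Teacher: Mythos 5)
Your proposal is correct and follows essentially the same route as the paper: the group action and part (iv) come from Lemmas \ref{lem:S2} and \ref{lem:ST2} via the presentation in Lemma \ref{lem:isoiso}, and the formulas in (i)--(iii) are then computed from Lemmas \ref{lem:T0Aq}, \ref{conj:Ws}, \ref{lem:S2} together with Lemma \ref{lem:elimG} and (\ref{eq:helpful}), exactly the ingredients the paper cites for its ``routine'' verification. Your expanded account of that verification (e.g.\ recognizing the reordered cubic as $\lbrack\lbrack \mathcal W_{-k},\mathcal W_1\rbrack_q,\mathcal W_1\rbrack=\lbrack{\mathcal{\tilde G}}_{k+1},\mathcal W_1\rbrack$ and transporting indices with (\ref{eq:helpful})) checks out.
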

\begin{proof} 
For notational convenience we identify the group
$\mathbb Z_2 \star \mathbb Z_2 \star \mathbb Z_2$ with the
group defined in Lemma
\ref{lem:isoiso}, via the isomorphism in
Lemma
\ref{lem:isoiso}.
Comparing 
(\ref{eq:rels}) 
with the relations in
Lemmas
\ref{lem:S2},
\ref{lem:ST2}, we obtain a group 
homomorphism 
$\mathbb Z_2 \star \mathbb Z_2 \star \mathbb Z_2 \to {\rm AAut}(\mathcal A_q)$
that sends 
$s\mapsto S$ and
$t^{\pm 1}_0 \mapsto T^{\pm 1}_0 $ and $t^{\pm 1}_1 \mapsto T^{\pm 1}_1$.
This group homomorphism gives
an action of 
$\mathbb Z_2 \star \mathbb Z_2 \star \mathbb Z_2$
on 
the algebra $\mathcal A_q$ as a group of
(auto/antiauto)-morphisms such that
$s$,
$t^{\pm 1}_0$,
$t^{\pm 1}_1$
act as $S$, 
$T^{\pm 1}_0$,
$T^{\pm 1}_1$,
respectively. Using the identifications
(\ref{eq:send1}),
(\ref{eq:send2})
we find that this action satisfies condition (iv) in
the theorem statement.
By
(\ref{eq:Part1}) each of $a$, $b$, $c$ acts on $\mathcal A_q$
as an antiautomorphism. For these elements the action on
$\mathcal W_{-k}$,
$\mathcal W'_{-k}$,
$\mathcal W''_{-k}$,
$\Delta_{k+1}$
is routinely obtained using
Lemmas
\ref{lem:T0Aq},
\ref{conj:Ws},
\ref{lem:S2}
along with Lemma
\ref{lem:elimG} and
(\ref{eq:helpful}).
\end{proof}

\section{Suggestions for future research}

\noindent 
In this section we give some conjectures and problems concerning
$\mathcal O_q$ and $\mathcal A_q$.
\medskip

\noindent 
Earlier in this paper we gave a 
$\mathbb Z_2 \star \mathbb Z_2 \star \mathbb Z_2$ action on
$\mathcal O_q$ and $\mathcal A_q$. It is natural to
ask whether these algebras are characterized by this sort of
$\mathbb Z_2 \star \mathbb Z_2 \star \mathbb Z_2$ action.
As we pursue this question, let us begin with the simpler case of
$\mathcal O_q$. The following concept is motivated by Theorem
\ref{thm:z3}.

\begin{definition}\rm
Let $\mathcal A$ denote an algebra.
A sequence $A,B,C$ of elements in $\mathcal A$
is called a
{\it flipping triple}
whenever:
\begin{enumerate}
\item[\rm (i)] there exists an antiautomorphism of
$\mathcal A$ that sends
\begin{eqnarray*}
A \mapsto A + BC-CB,\qquad \qquad 
B \mapsto B, \qquad \qquad C \mapsto C;
\end{eqnarray*}
\item[\rm (ii)] there exists an antiautomorphism of
$\mathcal A$ that sends
\begin{eqnarray*}
B \mapsto B + CA-AC, \qquad \qquad 
C \mapsto C, \qquad \qquad A \mapsto A;
\end{eqnarray*}
\item[\rm (iii)] there exists an antiautomorphism of
$\mathcal A$ that sends
\begin{eqnarray*}
C \mapsto C + A B-BA,\qquad \qquad 
A \mapsto A, \qquad \qquad B \mapsto B;
\end{eqnarray*}
\item[\rm (iv)] the algebra $\mathcal A$ is generated by
$A,B,C$.
\end{enumerate}
\end{definition}

\begin{example}
\rm
Recall from Definition
\ref{def:qons}
the generators $A, B$ for the $q$-Onsager algebra $\mathcal O_q$.
Recall the element $C$ from
(\ref{eq:C}).
 By Theorem
\ref{thm:z3}
the sequence $A/(q-q^{-1})$, $B/(q-q^{-1})$, $C/(q-q^{-1})$
is a flipping triple for
 $\mathcal O_q$.
\end{example}

\begin{example}
\rm
Assume that $A,B,C$ freely generate $\mathcal A$.
One routinely checks that
$A,B,C$ is a flipping triple for $\mathcal A$.
\end{example}

\begin{problem}\rm
Find all the sequences 
$A$, $B$, $C$, $\mathcal A$ such that
$A$, $B$, $C$ is a flipping triple in the algebra $\mathcal A$.
\end{problem}

\noindent We define some notation.
Let $\lambda_1,\lambda_2,\ldots$ denote mutually
commuting indeterminates. Let $\mathbb F\lbrack \lambda_1, \lambda_2,\ldots
\rbrack$ denote the algebra  of polynomials in
 $\lambda_1,\lambda_2,\ldots$ 
that have all coefficients
in $\mathbb F$.
For a subset $Y \subseteq \mathcal A_q$ let $\langle Y\rangle$
denote the subalgebra of $\mathcal A_q$ generated by $Y$.
Shortly we will encounter some tensor products. All tensor products
in this paper are understood to be over $\mathbb F$.
\medskip

\noindent The following conjecture about $\mathcal A_q$
is a variation on
\cite[Conjecture~1]{basBel}.
\begin{conjecture}
\label{conj:4}
\rm The following {\rm (i)--(v)} hold:
\begin{enumerate}
\item[\rm (i)]
there exists an algebra isomorphism
 $\mathbb F\lbrack \lambda_1, \lambda_2,\ldots \rbrack \to 
\langle \mathcal W_0, \mathcal W_{-1}, \ldots \rangle$ 
 that sends $\lambda_{k+1} \mapsto \mathcal W_{-k}$ for $k \in \mathbb N$;
\item[\rm (ii)]
there exists an algebra isomorphism
 $\mathbb F\lbrack \lambda_1, \lambda_2,\ldots \rbrack \to 
\langle \mathcal W_1,\mathcal W_2,\ldots  \rangle$ 
 that sends $\lambda_{k+1} \mapsto \mathcal W_{k+1}$ for $k \in \mathbb N$;
\item[\rm (iii)]
there exists an algebra isomorphism
 $\mathbb F\lbrack \lambda_1, \lambda_2,\ldots \rbrack \to 
\langle \mathcal G_1,\mathcal G_2,\ldots \rangle$ 
 that sends $\lambda_{k+1} \mapsto \mathcal G_{k+1}$ for $k \in \mathbb N$;
\item[\rm (iv)]
there exists an algebra isomorphism
 $\mathbb F\lbrack \lambda_1, \lambda_2,\ldots \rbrack \to 
\langle \mathcal {\tilde G}_1,\mathcal {\tilde G}_2,\ldots \rangle$ 
 that sends $\lambda_{k+1} \mapsto \mathcal {\tilde G}_{k+1}$ for $k \in \mathbb N$;
\item[\rm (v)] the multiplication map
\begin{align*}
\langle  \mathcal W_0, \mathcal W{-1},\ldots \rangle
\otimes 
\langle  \mathcal G_1,\mathcal G_2,\ldots \rangle
\otimes 
\langle \mathcal {\tilde G}_1,\mathcal {\tilde G}_2,\ldots  \rangle
\otimes
\langle  \mathcal W_1,\mathcal W_2,\ldots \rangle
 & \to   \mathcal A_q
\\
 u \otimes v \otimes w \otimes x  &\mapsto  uvwx
 \end{align*}
is an isomorphism of vector spaces.
\end{enumerate}
\end{conjecture}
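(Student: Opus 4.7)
The plan is to treat Conjecture \ref{conj:4} as a PBW-type theorem for $\mathcal{A}_q$, exploiting the symmetries developed earlier in this paper to reduce the four subalgebra statements (i)--(iv) to essentially one case, and then attacking (v) by a normal-form argument.

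First I would reduce (i)--(iv) to (i) and (iii). The commutation relations (\ref{eq:3p4}) and (\ref{eq:3p10}) make each of the four subalgebras commutative, so each is automatically a quotient of the polynomial ring $\mathbb{F}\lbrack\lambda_1,\lambda_2,\ldots\rbrack$ via the obvious surjection; the content of (i)--(iv) is only that these surjections are injective. The automorphism $\Omega$ of Lemma \ref{lem:PhiAq} interchanges $\mathcal{W}_{-k} \leftrightarrow \mathcal{W}_{k+1}$ and $\mathcal{G}_{k+1} \leftrightarrow \tilde{\mathcal{G}}_{k+1}$, giving (i)$\Leftrightarrow$(ii) and (iii)$\Leftrightarrow$(iv). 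Furthermore, Lemma \ref{lem:T0Aq} shows $T_0(\mathcal{G}_{k+1})=\tilde{\mathcal{G}}_{k+1}$, and since $T_0$ is an automorphism it restricts to an isomorphism $\langle \mathcal{G}_1,\mathcal{G}_2,\ldots\rangle \to \langle \tilde{\mathcal{G}}_1,\tilde{\mathcal{G}}_2,\ldots\rangle$.

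To prove (i) and (iii) I would produce enough $\mathcal{A}_q$-modules to separate polynomial relations. The most promising supply comes from the evaluation modules built in \cite{BK05} via the coideal embedding $\mathcal{A}_q \hookrightarrow U_q(\widehat{\mathfrak{sl}}_2)$: on a tensor product of $n$ evaluation modules at generic spectral parameters $z_1,\ldots,z_n$, each $\mathcal{W}_{-k}$ should act via a symmetric Laurent polynomial in the $z_i$ whose leading behaviour in $k$ is controlled by the highest weight, so that varying $n$ and the $z_i$ defeats any hypothetical polynomial relation among the $\mathcal{W}_{-k}$. Part (iii) then follows because (\ref{eq:3p1}) expresses $\tilde{\mathcal{G}}_{k+1}-\mathcal{G}_{k+1}$ as $(q+q^{-1})\lbrack\mathcal{W}_0,\mathcal{W}_{k+1}\rbrack$, while the central elements $\Delta_{k+1}$ control the symmetric combination; a filtration argument translates algebraic independence of the $\mathcal{W}_{k+1}$ into that of the $\mathcal{G}_{k+1}$.

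For (v) the plan is a Bergman diamond-lemma argument. Order the generators so that $\mathcal{W}_{-k}$ precedes $\mathcal{G}_{k+1}$ precedes $\tilde{\mathcal{G}}_{k+1}$ precedes $\mathcal{W}_{k+1}$, refined by subscript within each class, and use the defining relations (\ref{eq:3p4})--(\ref{eq:3p11}) together with Lemma \ref{lem:elimG} as a rewriting system that pushes every product into the prescribed block order. Surjectivity of the multiplication map in (v) is then an induction on total degree, and the tensor-product assertion amounts to showing that all overlap ambiguities resolve. The main obstacle will be precisely this ambiguity verification: the infinite family of relations generates overlaps indexed by triples of naturals, and the closed-form reductions are not visibly symmetric. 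I expect one will either need an inductive scheme on $(k,\ell,m)$ that mirrors the inductive structure of (\ref{eq:3p5})--(\ref{eq:3p11}), or a structural route that identifies $\mathcal{A}_q$ with a subalgebra of a Drinfeld-style current realization where the analogous PBW theorem is already available.
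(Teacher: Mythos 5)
The statement you are proving is presented in the paper as Conjecture \ref{conj:4} (a variation on \cite[Conjecture~1]{basBel}); the paper offers no proof of it, so there is no argument of the author's to compare yours against. Judged on its own terms, your proposal is a sensible research program but not a proof, and the gaps sit exactly at the points where the conjecture is hard. Your reductions via $\Omega$ and via $T_0(\mathcal G_{k+1})=\tilde{\mathcal G}_{k+1}$ are fine, and you are right that (\ref{eq:3p4}), (\ref{eq:3p10}) reduce (i)--(iv) to injectivity of four surjections from the polynomial ring. But the injectivity step for (i) is carried entirely by the sentence that $\mathcal W_{-k}$ ``should act via a symmetric Laurent polynomial in the $z_i$ whose leading behaviour in $k$ is controlled by the highest weight.'' No such formula is exhibited, and establishing it requires first constructing a homomorphism from the abstractly presented $\mathcal A_q$ of Definition \ref{def:Aq} onto the realization inside $U_q(\widehat{\mathfrak{sl}}_2)$ and then computing the images of all the $\mathcal W_{-k}$ explicitly enough to rule out every polynomial relation; neither is done. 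The deduction of (iii) from (i)--(ii) is likewise only gestured at: knowing that $\tilde{\mathcal G}_{k+1}-\mathcal G_{k+1}=(q+q^{-1})\lbrack \mathcal W_0,\mathcal W_{k+1}\rbrack$ and that the $\Delta_{k+1}$ are central does not by itself produce a filtration in which algebraic independence of the $\mathcal W_{k+1}$ forces that of the $\mathcal G_{k+1}$; you would need to specify the filtration and verify that the $\mathcal G_{k+1}$ have independent leading terms in the associated graded algebra.

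For (v) you candidly concede the decisive point: the diamond-lemma confluence check over the infinite family of overlaps indexed by triples of natural numbers is not carried out, and without it the map in (v) is at best surjective (which does follow by degree induction from (\ref{eq:3p1})--(\ref{eq:3p11}) and Lemma \ref{lem:elimG}), not injective. There is also a subtlety you should flag in setting up the rewriting system: relations (\ref{eq:3p2}), (\ref{eq:3p3}) rewrite quadratic monomials into linear combinations involving generators of strictly higher subscript ($\mathcal W_{-k-1}$, $\mathcal W_{k+2}$), so a naive order refined by subscript within each class is not obviously compatible with termination; one must work with a degree filtration first and order only the leading quadratic parts. As it stands the proposal identifies the right targets but proves none of (i)--(v).
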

\noindent A proof of Conjecture
\ref{conj:4} would yield a PBW basis for $\mathcal A_q$.
\medskip

\noindent The next conjecture concerns the center 
 $Z(\mathcal A_q)$.



\begin{conjecture}
\label{conj:3}
\rm The following {\rm (i)--(iii)} hold:
\begin{enumerate}
\item[\rm (i)]
there exists an algebra isomorphism
 $\mathbb F\lbrack \lambda_1, \lambda_2,\ldots \rbrack \to Z(\mathcal A_q)$
 that sends $\lambda_{k+1} \mapsto \Delta_{k+1}$ for $k \in \mathbb N$;
\item[\rm (ii)]
there exists an algebra isomorphism
$\mathcal O_q \to \langle \mathcal W_0, \mathcal W_1\rangle$ 
that sends $A\mapsto \mathcal W_0$ and
$B\mapsto \mathcal W_1$;
\item[\rm (iii)] the multiplication map 
\begin{align*}
\langle \mathcal W_0, \mathcal W_1\rangle 
\otimes
Z(\mathcal A_q) 
 & \to   \mathcal A_q
\\
 u \otimes v  &\mapsto  uv
 \end{align*}
 is an isomorphsim of algebras.
\end{enumerate}
 \end{conjecture}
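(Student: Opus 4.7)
The three parts are tightly linked: if an algebra homomorphism $\varepsilon \colon \mathcal A_q \to \mathcal O_q$ exists with $\varepsilon(\mathcal W_0)=A$ and $\varepsilon(\mathcal W_1)=B$ (this is the Baseilhac--Belliard Conjecture~2 mentioned before Definition~\ref{def:Aq}), and if a tensor decomposition $\mathcal A_q \cong \langle \mathcal W_0,\mathcal W_1\rangle \otimes \mathbb F[\Delta_1,\Delta_2,\ldots]$ can be established, then (ii) follows from the splitting afforded by $\varepsilon$, (iii) is the decomposition itself, and (i) follows by intersecting the decomposition with $Z(\mathcal A_q)$. I would therefore attempt first to construct $\varepsilon$, and then establish the decomposition.

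\textbf{Step 1: part (ii).} The $q$-Dolan/Grady relations satisfied by $\mathcal W_0,\mathcal W_1$ in $\mathcal A_q$ immediately yield a surjection $\pi\colon \mathcal O_q \to \langle \mathcal W_0,\mathcal W_1\rangle$. To prove injectivity, I would construct the retraction $\varepsilon$ explicitly: define it on the remaining generators by specifying $\varepsilon(\mathcal W_{k+1})$ and $\varepsilon(\mathcal W_{-k})$ as iterated images of $B$ and $A$ under the Baseilhac--Kolb automorphisms $T_0,T_1$ of $\mathcal O_q$ (in analogy with Lemmas~\ref{lem:T0Aq} and~\ref{conj:Ws}), and then use (\ref{eq:3p1}) to define $\varepsilon(\mathcal G_{k+1})$ and $\varepsilon(\tilde{\mathcal G}_{k+1})$. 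The core computation is verifying that this assignment respects every relation (\ref{eq:3p2})--(\ref{eq:3p11}); this is a substantial induction on $k,\ell$ carried out inside $\mathcal O_q$. Once $\varepsilon$ is well-defined, the composition $\varepsilon|_{\langle\mathcal W_0,\mathcal W_1\rangle}\circ\pi$ is the identity of $\mathcal O_q$, forcing $\pi$ to be an isomorphism.

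\textbf{Step 2: parts (i) and (iii).} For surjectivity of the multiplication map in (iii), I would show by induction on $k$ that each of $\mathcal W_{-k}$, $\mathcal W_{k+1}$, $\mathcal G_{k+1}$, $\tilde{\mathcal G}_{k+1}$ lies in the subalgebra generated by $\mathcal W_0,\mathcal W_1,\Delta_1,\ldots,\Delta_{k+1}$, using (\ref{eq:3p2}), (\ref{eq:3p3}), the elimination formulas (\ref{eq:elim1})--(\ref{eq:elim4}), and the explicit expressions for $\Delta_{k+1}$ given in \cite[Lemma~2.1]{basBel}. For injectivity, I would try to leverage Conjecture~\ref{conj:4}: if it holds, the conjectured PBW monomials separate cleanly into an ``$\mathcal O_q$-part'' and a ``$\Delta$-part''. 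As an alternative, I would attempt to construct a faithful representation of $\mathcal A_q$ on $\mathcal O_q \otimes \mathbb F[\lambda_1,\lambda_2,\ldots]$ in which $\mathcal W_0,\mathcal W_1$ act via $\varepsilon$ and the left regular representation of $\mathcal O_q$ on the first factor, while the $\Delta_{k+1}$ act as multiplication by $\lambda_{k+1}$ on the second. From such a representation, (iii) follows, and then (i) follows from $Z(\mathcal A_q) \cap (\mathcal O_q \otimes \mathbb F[\lambda_\bullet]) = Z(\mathcal O_q)\otimes \mathbb F[\lambda_\bullet] = \mathbb F[\lambda_\bullet]$, provided $Z(\mathcal O_q)=\mathbb F$ is established separately (for instance via the PBW basis of \cite{BK}).

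\textbf{Main obstacle.} The primary difficulty is the construction of $\varepsilon$ and the verification of its well-definedness: the family (\ref{eq:3p2})--(\ref{eq:3p11}) is infinite, and the natural candidate images involve nested $q$-commutators of long words in $A,B$, so proving that all such identities hold in $\mathcal O_q$ will likely require a delicate induction combining the $q$-Serre relations with the commutation properties from Lemma~\ref{lem:qcom}. A secondary hurdle is establishing triviality of $Z(\mathcal O_q)$, and proving enough of Conjecture~\ref{conj:4} (or constructing the faithful representation sketched above) to run an injectivity argument for the multiplication map in (iii).
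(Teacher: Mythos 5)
The statement you are addressing is presented in the paper as a \emph{conjecture}; the paper contains no proof of it, so there is no argument of the author's to compare yours against. More importantly, what you have written is a research plan rather than a proof, and the plan itself rests on unresolved statements. The existence of the retraction $\varepsilon\colon \mathcal A_q \to \mathcal O_q$ with $\varepsilon(\mathcal W_0)=A$, $\varepsilon(\mathcal W_1)=B$ is precisely the Baseilhac--Belliard conjecture that the paper explicitly records as open, and your Step~1 reduces part (ii) to it without carrying out the ``substantial induction'' you acknowledge is needed. Your injectivity argument for the multiplication map in (iii) is deferred either to Conjecture~\ref{conj:4} (also open in the paper) or to a faithful representation of $\mathcal A_q$ on $\mathcal O_q \otimes \mathbb F[\lambda_1,\lambda_2,\ldots]$ whose construction is not given and which, as the paper itself notes after the conjecture, is essentially equivalent to the isomorphism the conjecture is meant to produce --- so that route is close to circular as stated. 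Finally, the triviality of $Z(\mathcal O_q)$, which you need to deduce (i) from (iii), is flagged as a ``secondary hurdle'' but not established. None of the three parts is actually proved.

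That said, the architecture of your plan is reasonable and consistent with how the paper frames the problem: the paper's own remarks (the generation result of Baseilhac--Belliard, the elimination formulas of Lemma~\ref{lem:elimG}, and the reduced presentation proposed in Conjecture~\ref{conj:pres}) all point toward exactly the decomposition $\mathcal A_q \cong \langle \mathcal W_0,\mathcal W_1\rangle \otimes Z(\mathcal A_q)$ that you target, and your observation that surjectivity of the multiplication map should follow by induction from (\ref{eq:3p2}), (\ref{eq:3p3}) and the central elements $\Delta_{k+1}$ is the right first step. But a correct assessment of your submission is that it identifies the obstacles rather than overcoming them; to turn it into a proof you would need, at minimum, an independent construction of $\varepsilon$ (verifying all of (\ref{eq:3p2})--(\ref{eq:3p11}) in $\mathcal O_q$), a linear-independence argument for the proposed spanning set that does not assume Conjecture~\ref{conj:4}, and a proof that $Z(\mathcal O_q)=\mathbb F$.
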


\noindent A proof
of Conjecture \ref{conj:3} would yield an algebra
isomorphism
$
\mathcal O_q \otimes \mathbb F \lbrack \lambda_1,
\lambda_2,\ldots \rbrack
\to
\mathcal A_q 
$.
\medskip

\noindent Above Lemma
\ref{lem:elimG}
we mentioned a redundancy among the generators
(\ref{eq:4gens}) of $\mathcal A_q$.
We now pursue this theme more deeply.
Using
(\ref{eq:3p1}) we 
eliminate the generators
$\lbrace \mathcal G_{k+1}\rbrace_{k \in \mathbb N}$:
\begin{align*}
\mathcal G_{k+1} = \mathcal {\tilde G}_{k+1} + (q+q^{-1}) 
\lbrack \mathcal W_1, \mathcal W_{-k} \rbrack
\qquad \qquad (k \in \mathbb N).
\end{align*}
Next we use
(\ref{eq:3p2}), (\ref{eq:3p3}) to recursively eliminate 
$\mathcal W_{-k}$, $\mathcal W_{k+1}$ for $k\geq 1$:
\begin{align*}
\mathcal W_{-1} &= \mathcal W_1 -\frac{\lbrack \mathcal {\tilde G}_{1}, 
\mathcal W_0\rbrack_q}{(q^2-q^{-2})^2},
\\
\mathcal W_{3} &= \mathcal W_1 
-
\frac{\lbrack \mathcal {\tilde G}_{1}, 
\mathcal W_0\rbrack_q}{(q^2-q^{-2})^2}
-
\frac{\lbrack
\mathcal W_1,
\mathcal {\tilde G}_{2} 
\rbrack_q}{(q^2-q^{-2})^2},
\\
\mathcal W_{-3} &= \mathcal W_1 
-
\frac{\lbrack \mathcal {\tilde G}_{1}, 
\mathcal W_0\rbrack_q}{(q^2-q^{-2})^2}
-
\frac{\lbrack
\mathcal W_1,
\mathcal {\tilde G}_{2} 
\rbrack_q}{(q^2-q^{-2})^2}
-
\frac{\lbrack \mathcal {\tilde G}_{3}, 
\mathcal W_0\rbrack_q}{(q^2-q^{-2})^2},
\\
\mathcal W_{5} &= \mathcal W_1 
-
\frac{\lbrack \mathcal {\tilde G}_{1}, 
\mathcal W_0\rbrack_q}{(q^2-q^{-2})^2}
-
\frac{\lbrack
\mathcal W_1,
\mathcal {\tilde G}_{2} 
\rbrack_q}{(q^2-q^{-2})^2}
-
\frac{\lbrack \mathcal {\tilde G}_{3}, 
\mathcal W_0\rbrack_q}{(q^2-q^{-2})^2}
-
\frac{\lbrack
\mathcal W_1,
\mathcal {\tilde G}_{4} 
\rbrack_q}{(q^2-q^{-2})^2},
\\
\mathcal W_{-5} &= \mathcal W_1 
-
\frac{\lbrack \mathcal {\tilde G}_{1}, 
\mathcal W_0\rbrack_q}{(q^2-q^{-2})^2}
-
\frac{\lbrack
\mathcal W_1,
\mathcal {\tilde G}_{2} 
\rbrack_q}{(q^2-q^{-2})^2}
-
\frac{\lbrack \mathcal {\tilde G}_{3}, 
\mathcal W_0\rbrack_q}{(q^2-q^{-2})^2}
-
\frac{\lbrack
\mathcal W_1,
\mathcal {\tilde G}_{4} 
\rbrack_q}{(q^2-q^{-2})^2}
-
\frac{\lbrack \mathcal {\tilde G}_{5}, 
\mathcal W_0\rbrack_q}{(q^2-q^{-2})^2},
\\
& \cdots
\end{align*}
\begin{align*}
\mathcal W_2 &= \mathcal W_0 -
\frac{
\lbrack \mathcal W_1, \mathcal {\tilde G}_1\rbrack_q}{(q^2-q^{-2})^2
},
\\
\mathcal W_{-2} &= \mathcal W_0 -
\frac{
\lbrack \mathcal W_1, \mathcal {\tilde G}_1\rbrack_q}{(q^2-q^{-2})^2}
-
\frac{
\lbrack \mathcal {\tilde G}_2, \mathcal W_0\rbrack_q}{(q^2-q^{-2})^2},
\\
\mathcal W_{4} &= \mathcal W_0 -
\frac{
\lbrack \mathcal W_1, \mathcal {\tilde G}_1\rbrack_q}{(q^2-q^{-2})^2}
-
\frac{
\lbrack \mathcal {\tilde G}_2, \mathcal W_0\rbrack_q}{(q^2-q^{-2})^2}
-
\frac{
\lbrack \mathcal W_1, \mathcal {\tilde G}_3\rbrack_q}{(q^2-q^{-2})^2},
\\
\mathcal W_{-4} &= \mathcal W_0 -
\frac{
\lbrack \mathcal W_1, \mathcal {\tilde G}_1\rbrack_q}{(q^2-q^{-2})^2}
-
\frac{
\lbrack \mathcal {\tilde G}_2, \mathcal W_0\rbrack_q}{(q^2-q^{-2})^2}
-
\frac{
\lbrack \mathcal W_1, \mathcal {\tilde G}_3\rbrack_q}{(q^2-q^{-2})^2}
-
\frac{
\lbrack \mathcal {\tilde G}_4, \mathcal W_0\rbrack_q}{(q^2-q^{-2})^2},
\\
\mathcal W_{6} &= \mathcal W_0 -
\frac{
\lbrack \mathcal W_1, \mathcal {\tilde G}_1\rbrack_q}{(q^2-q^{-2})^2}
-
\frac{
\lbrack \mathcal {\tilde G}_2, \mathcal W_0\rbrack_q}{(q^2-q^{-2})^2}
-
\frac{
\lbrack \mathcal W_1, \mathcal {\tilde G}_3\rbrack_q}{(q^2-q^{-2})^2}
-
\frac{
\lbrack \mathcal {\tilde G}_4, \mathcal W_0\rbrack_q}{(q^2-q^{-2})^2}
-
\frac{
\lbrack \mathcal W_1, \mathcal {\tilde G}_5\rbrack_q}{(q^2-q^{-2})^2},
\\
&\cdots
\end{align*}

\noindent 
For any integer $k \geq 1$, 
the generators $\mathcal W_{-k}$, $\mathcal W_{k+1}$
are given as follows. 
\\
\noindent For odd $k=2r+1$,
\begin{align*}
\mathcal W_{-k} = \mathcal W_1
- 
\sum_{\ell=0}^r 
\frac{
\lbrack 
 \mathcal {\tilde G}_{2\ell+1}, \mathcal W_0\rbrack_q
}{(q^2-q^{-2})^2} 
-
\sum_{\ell=1}^r 
\frac{
\lbrack
 \mathcal W_1,
 \mathcal {\tilde G}_{2\ell} 
 \rbrack_q
}{(q^2-q^{-2})^2},
\\
\mathcal W_{k+1} = \mathcal W_0
- 
\sum_{\ell=0}^r 
\frac{
\lbrack 
\mathcal W_1,
 \mathcal {\tilde G}_{2\ell+1}\rbrack_q
}{(q^2-q^{-2})^2} 
-
\sum_{\ell=1}^r 
\frac{
\lbrack
 \mathcal {\tilde G}_{2\ell},
 \mathcal W_0
 \rbrack_q
}{(q^2-q^{-2})^2}.
\end{align*}
For even $k=2r$,
\begin{align*}
\mathcal W_{-k} = \mathcal W_0
- 
\sum_{\ell=0}^{r-1} 
\frac{
\lbrack 
\mathcal W_1,
 \mathcal {\tilde G}_{2\ell+1}\rbrack_q
}{(q^2-q^{-2})^2} 
-
\sum_{\ell=1}^r 
\frac{
\lbrack
 \mathcal {\tilde G}_{2\ell},
 \mathcal W_0
 \rbrack_q
}{(q^2-q^{-2})^2},
\\
\mathcal W_{k+1} = \mathcal W_1
- 
\sum_{\ell=0}^{r-1} 
\frac{
\lbrack 
 \mathcal {\tilde G}_{2\ell+1}, \mathcal W_0\rbrack_q
}{(q^2-q^{-2})^2} 
-
\sum_{\ell=1}^r 
\frac{
\lbrack
 \mathcal W_1,
 \mathcal {\tilde G}_{2\ell} 
 \rbrack_q
}{(q^2-q^{-2})^2}. 
\end{align*}

\noindent So far, we have expressed the generators
(\ref{eq:4gens}) 
in terms of 
$\mathcal W_0, \mathcal W_1$,$\lbrace \mathcal {\tilde G}_{k+1}\rbrace_{k\in \mathbb N}$.
We now consider how these remaining generators
are related to each other.

\begin{lemma}
\label{lem:newrels}
The following relations hold in the algebra $\mathcal A_q$:
\begin{align*}
&\lbrack \mathcal W_0, \mathcal {\tilde G}_1 \rbrack = 
\lbrack \mathcal W_0, \lbrack \mathcal W_0, \mathcal W_1 \rbrack_q \rbrack,
\\
&\lbrack \mathcal {\tilde G}_1, \mathcal W_1 \rbrack = 
\lbrack \lbrack \mathcal W_0, \mathcal W_1 \rbrack_q, \mathcal W_1 \rbrack
\end{align*}
and for $k\geq 1$,
\begin{align*}
&\lbrack \mathcal {\tilde G}_{k+1}, \mathcal W_0 \rbrack = 
\frac{
\lbrack \mathcal W_0, \lbrack \mathcal W_0, \lbrack \mathcal W_1,
\mathcal {\tilde G}_k
\rbrack_q
\rbrack_q
\rbrack}{(q^2-q^{-2})^2},
\\
&\lbrack \mathcal W_1, \mathcal {\tilde G}_{k+1}\rbrack = 
\frac{
\lbrack\lbrack \lbrack \mathcal {\tilde G}_k, \mathcal W_0 \rbrack_q, 
\mathcal W_1 \rbrack_q,
\mathcal W_1
\rbrack}{(q^2-q^{-2})^2}.
\end{align*}
\end{lemma}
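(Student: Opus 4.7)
The plan is to reduce each identity to one of the ``elimination'' relations in Lemma \ref{lem:elimG}, using (\ref{eq:3p2}) and (\ref{eq:3p3}) to rewrite the innermost $q$-commutator on the right-hand side as a difference of two $\mathcal W$-generators, and then to kill one of the two resulting terms by invoking the commutativity relations (\ref{eq:3p4}).

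For the first two identities (the $k=0$ cases), there is essentially nothing to do: they are precisely (\ref{eq:elim3}) and (\ref{eq:elim4}) specialised to $k=0$, together with the elementary fact that $\lbrack X,Y\rbrack = -\lbrack Y,X\rbrack$.

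For the third identity (with $k\ge 1$), I would process the right-hand side from the inside out. First, apply (\ref{eq:3p3}) with the index shifted by one to get
\begin{align*}
\lbrack \mathcal W_1,\mathcal{\tilde G}_k\rbrack_q \;=\; \rho\,\mathcal W_{k+1} \;-\; \rho\,\mathcal W_{-(k-1)},
\end{align*}
so the numerator splits into $\rho\lbrack\mathcal W_0,\lbrack \mathcal W_0,\mathcal W_{k+1}\rbrack_q\rbrack \;-\; \rho\lbrack\mathcal W_0,\lbrack \mathcal W_0,\mathcal W_{-(k-1)}\rbrack_q\rbrack$. The first piece is $\rho\lbrack\mathcal W_0,\mathcal{\tilde G}_{k+1}\rbrack$ by (\ref{eq:elim3}). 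For the second piece, note that $\mathcal W_0$ and $\mathcal W_{-(k-1)}$ both have nonpositive index, so they commute by (\ref{eq:3p4}); hence $\lbrack \mathcal W_0,\mathcal W_{-(k-1)}\rbrack_q=(q-q^{-1})\mathcal W_0\mathcal W_{-(k-1)}$, and bracketing once more with $\mathcal W_0$ yields $0$. Dividing by $(q^2-q^{-2})^2$ and using $\rho=-(q^2-q^{-2})^2$ turns $\rho\lbrack\mathcal W_0,\mathcal{\tilde G}_{k+1}\rbrack/(q^2-q^{-2})^2$ into $-\lbrack\mathcal W_0,\mathcal{\tilde G}_{k+1}\rbrack = \lbrack \mathcal{\tilde G}_{k+1},\mathcal W_0\rbrack$, which is the desired left-hand side.

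The fourth identity is treated in the same spirit, swapping the roles of $\mathcal W_0$ and $\mathcal W_1$. One first uses (\ref{eq:3p2}) (with the index shifted by one) to write $\lbrack \mathcal{\tilde G}_k,\mathcal W_0\rbrack_q = \rho(\mathcal W_{-k}-\mathcal W_k)$, then applies (\ref{eq:elim4}) to the $\mathcal W_{-k}$ term, and uses (\ref{eq:3p4}) on $\lbrack \mathcal W_k,\mathcal W_1\rbrack_q$ to kill the $\mathcal W_k$ term, finally simplifying with $\rho=-(q^2-q^{-2})^2$. I expect the only real potential obstacle to be bookkeeping: making sure the signs and the corner case $k=1$ (where $\mathcal W_{-(k-1)}=\mathcal W_0$, respectively $\mathcal W_k=\mathcal W_1$) are handled uniformly, which they are, since $\mathcal W_0$ trivially commutes with itself and likewise for $\mathcal W_1$.
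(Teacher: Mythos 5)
Your proposal is correct and follows essentially the same route as the paper: the first two identities are read off from (\ref{eq:elim3}), (\ref{eq:elim4}) at $k=0$, and the remaining two are obtained by combining (\ref{eq:elim3}) resp.\ (\ref{eq:elim4}) with the index-shifted relations (\ref{eq:3p3}) resp.\ (\ref{eq:3p2}), killing the extra $\mathcal W$-term via (\ref{eq:3p4}) and using $\rho=-(q^2-q^{-2})^2$. The only cosmetic difference is that you work from the right-hand side inward and check directly that $\lbrack \mathcal W_0,\lbrack \mathcal W_0,\mathcal W_{-(k-1)}\rbrack_q\rbrack=0$, whereas the paper works from the left-hand side and interchanges the nested plain and $q$-brackets before dropping that term.
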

\begin{proof} The first two relations are 
(\ref{eq:elim3}),
(\ref{eq:elim4}) with $k=0$.
To obtain the third relation, use
(\ref{eq:elim3}) and 
(\ref{eq:3p3}),  
(\ref{eq:3p4})  
to obtain
\begin{align*}
\lbrack \mathcal {\tilde G}_{k+1}, \mathcal W_0 \rbrack 
&= 
- \lbrack \mathcal W_0, \lbrack \mathcal W_0,  \mathcal W_{k+1}\rbrack_q \rbrack
\\
&= 
-\lbrack \mathcal W_0, \lbrack \mathcal W_0,  \mathcal W_{k+1}\rbrack \rbrack_q
\\
&=
-\lbrack \mathcal W_0, \lbrack \mathcal W_0,  
\mathcal W_{k+1}-\mathcal W_{1-k} \rbrack \rbrack_q
\\
&= \frac{
\lbrack \mathcal W_0, \lbrack \mathcal W_0, \lbrack \mathcal W_1,
\mathcal {\tilde G}_k
\rbrack_q
\rbrack
\rbrack_q}{(q^2-q^{-2})^2}
\\
&= \frac{
\lbrack \mathcal W_0, \lbrack \mathcal W_0, \lbrack \mathcal W_1,
\mathcal {\tilde G}_k
\rbrack_q
\rbrack_q
\rbrack}{(q^2-q^{-2})^2}.
\end{align*}
The last relation is similarly obtained.
\end{proof}

\begin{conjecture}
\label{conj:pres}
\rm The algebra $\mathcal A_q$ has
a presentation by generators 
$\mathcal W_0$, 
$\mathcal W_1$, 
$\lbrace \mathcal {\tilde G}_{k+1}\rbrace_{k \in \mathbb N}$ and
relations
\begin{align*}
&\lbrack \mathcal W_0, \lbrack \mathcal W_0, \lbrack \mathcal W_0,
\mathcal W_1 \rbrack_q \rbrack_{q^{-1}} \rbrack = (q^2-q^{-2})^2
\lbrack \mathcal W_1,\mathcal W_0\rbrack,
\\
&\lbrack \mathcal W_1, \lbrack \mathcal W_1, \lbrack \mathcal W_1,
\mathcal W_0 \rbrack_q \rbrack_{q^{-1}} \rbrack = (q^2-q^{-2})^2
\lbrack \mathcal W_0,\mathcal W_1\rbrack,
\\
&\lbrack \mathcal W_0, \mathcal {\tilde G}_1 \rbrack = 
\lbrack \mathcal W_0, \lbrack \mathcal W_0, \mathcal W_1 \rbrack_q \rbrack,
\\
&\lbrack \mathcal {\tilde G}_1, \mathcal W_1 \rbrack = 
\lbrack \lbrack \mathcal W_0, \mathcal W_1 \rbrack_q, \mathcal W_1 \rbrack,
\\
&\lbrack \mathcal {\tilde G}_{k+1}, \mathcal W_0 \rbrack = 
\frac{
\lbrack \mathcal W_0, \lbrack \mathcal W_0, \lbrack \mathcal W_1,
\mathcal {\tilde G}_k
\rbrack_q
\rbrack_q
\rbrack}{(q^2-q^{-2})^2} \qquad \qquad (k\geq 1),
\\
&\lbrack \mathcal W_1, \mathcal {\tilde G}_{k+1}\rbrack = 
\frac{
\lbrack\lbrack \lbrack \mathcal {\tilde G}_k, \mathcal W_0 \rbrack_q, 
\mathcal W_1 \rbrack_q,
\mathcal W_1
\rbrack}{(q^2-q^{-2})^2} \qquad \qquad (k\geq 1),
\\
&\lbrack \mathcal {\tilde G}_{k+1}, \mathcal {\tilde G}_{\ell+1} \rbrack = 0
\qquad \qquad 
(k, \ell \in \mathbb N).
\end{align*}
\end{conjecture}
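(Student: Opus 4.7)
The plan is to show that the algebra $\mathcal B$ presented by the generators and relations of Conjecture \ref{conj:pres} is isomorphic to $\mathcal A_q$, by constructing mutually inverse homomorphisms $\phi : \mathcal B \to \mathcal A_q$ and $\psi : \mathcal A_q \to \mathcal B$.

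For the forward map $\phi$, each defining relation of $\mathcal B$ already holds in $\mathcal A_q$ among the same-named elements: the two $q$-Dolan/Grady relations for $\mathcal W_0, \mathcal W_1$ are \cite[eqn.~(3.7)]{basBel}, the last family $\lbrack \mathcal {\tilde G}_{k+1}, \mathcal {\tilde G}_{\ell+1}\rbrack=0$ is (\ref{eq:3p10}), and the four intermediate families are exactly the content of Lemma \ref{lem:newrels}. Hence $\phi$ is a well-defined algebra homomorphism. It is surjective, because the recursive formulas displayed between Lemma \ref{lem:elimG} and Lemma \ref{lem:newrels} express every generator $\mathcal W_{-k}, \mathcal W_{k+1}, \mathcal G_{k+1}$ of $\mathcal A_q$ as a polynomial in $\mathcal W_0, \mathcal W_1, \{\mathcal {\tilde G}_{j+1}\}_{j \in \mathbb N}$, all of which lie in the image of $\phi$.

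For the reverse map $\psi$, define inside $\mathcal B$ elements $\mathcal W_{-k}, \mathcal W_{k+1}$ (for $k \geq 1$) and $\mathcal G_{k+1}$ (for $k \in \mathbb N$) by those same recursive formulas. To obtain $\psi$ one must verify that these formulas, together with the generators $\mathcal W_0, \mathcal W_1, \{\mathcal {\tilde G}_{k+1}\}_{k \in \mathbb N}$ of $\mathcal B$, satisfy the complete set of defining relations (\ref{eq:3p1})--(\ref{eq:3p11}) of $\mathcal A_q$. Once $\psi$ is available, the compositions $\psi \circ \phi$ and $\phi \circ \psi$ act as the identity on the respective generating sets (the recursion formulas hold in $\mathcal A_q$ by (\ref{eq:3p1})--(\ref{eq:3p3})), hence are the identity throughout, and Conjecture \ref{conj:pres} follows.

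The principal obstacle is the verification inside $\mathcal B$ of the defining relations of $\mathcal A_q$. Relations (\ref{eq:3p1})--(\ref{eq:3p3}) are essentially built into the recursive definitions, and (\ref{eq:3p10}) for the $\mathcal {\tilde G}$'s is present in the presentation. The families (\ref{eq:3p6})--(\ref{eq:3p9}) and (\ref{eq:3p11}) should yield to induction on $k$ from the two $\mathcal {\tilde G}$-recursion relations in the presentation, combined with Jacobi-type identities for $q$-commutators. The genuinely delicate statements are the commutativity relations (\ref{eq:3p4}) and (\ref{eq:3p5}) among the $\mathcal W_{\pm k}$'s; in $\mathcal B$ these appear to require a double induction intertwining the $q$-Dolan/Grady relations with the $\mathcal {\tilde G}$-recursion, and would amount to a weak form of Conjecture \ref{conj:4}(i),(ii). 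I expect this is where the real work lies.
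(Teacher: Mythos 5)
This statement is labeled a conjecture and the paper supplies no proof of it, so there is nothing on the paper's side to compare your argument against; the question is simply whether your proposal closes the problem, and it does not. What you actually establish is only the easy half: the displayed relations hold in $\mathcal A_q$ (by \cite[eqn.~(3.7)]{basBel}, Lemma \ref{lem:newrels}, and (\ref{eq:3p10})), and the elimination formulas show that $\mathcal W_0$, $\mathcal W_1$, $\lbrace \mathcal{\tilde G}_{k+1}\rbrace$ generate $\mathcal A_q$. This gives a well-defined surjection $\phi:\mathcal B\to\mathcal A_q$ from the abstractly presented algebra, i.e.\ it shows $\mathcal A_q$ is a quotient of $\mathcal B$ --- but that is not the conjecture. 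The conjecture is that $\phi$ is an isomorphism, and everything hinges on the inverse map $\psi$.

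Your construction of $\psi$ is a plan, not a proof. Defining $\mathcal W_{-k}$, $\mathcal W_{k+1}$, $\mathcal G_{k+1}$ inside $\mathcal B$ by the recursions is fine, but the claim that the full list (\ref{eq:3p1})--(\ref{eq:3p11}) then holds in $\mathcal B$ is asserted, not verified: you say the families (\ref{eq:3p6})--(\ref{eq:3p9}) and (\ref{eq:3p11}) ``should yield to induction'' and concede that (\ref{eq:3p4}), (\ref{eq:3p5}) are ``where the real work lies.'' Those commutativity relations among the $\mathcal W_{\pm k}$ are precisely the content that makes this a conjecture rather than a lemma; reducing them to ``a weak form of Conjecture \ref{conj:4}(i),(ii)'' replaces one open statement with another open statement from the same paper. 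Until the relations of $\mathcal A_q$ are actually derived from the seven displayed relation families (or injectivity of $\phi$ is established by some other means, e.g.\ a spanning/PBW argument in $\mathcal B$ matched against a basis of $\mathcal A_q$), the statement remains open. Your outline is a reasonable strategy and correctly isolates the hard step, but it does not constitute a proof.
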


\section{Acknowledgment} 
The author thanks Pascal Baseilhac and
Samuel Belliard for many discussions about the $q$-Onsager algebra
and its current algebra.

\bigskip

\noindent Paul Terwilliger \hfil\break
\noindent Department of Mathematics \hfil\break
\noindent University of Wisconsin \hfil\break
\noindent 480 Lincoln Drive \hfil\break
\noindent Madison, WI 53706-1388 USA \hfil\break
\noindent email: {\tt terwilli@math.wisc.edu }\hfil\break

\end{document}